\newcommand{\comment}[1]{}
\newcommand{\sv}{V}
\newenvironment{proof}{\noindent {\bf Proof:}~}{\hspace*{\fill}\(\Box\)}
\newtheorem{theorem}{Theorem}
\newtheorem{claim}{Claim}
\newtheorem{definition}{Definition}
\newtheorem{lemma}{Lemma}
\def\noflash#1{\setbox0=\hbox{#1}\hbox to 1\wd0{\hfill}}
\newcommand{\vectorv}{{\normalfont\textbf{v}}}
\newcommand{\matrixm}{\textbf{M}}
\newcommand{\D}{{\textbf d_H}}
\newcommand{\R}{{\mbox{\it Verified\,}}}
\newcommand{\HH}{{\mathcal H}}
\newcommand{\VV}{{\mathcal V}}
\newcommand{\RBRecv}{{\tt RBRecv}}
\newcommand{\RBSend}{{\tt RBSend}}
\newcommand{\bfA}{{\bf A}}
\newcommand{\bfM}{{\bf M}}
\begin{document}
\title{Byzantine Convex Consensus: Preliminary Version\footnote{\normalsize We present an optimal algorithm in our follow-up work \cite{Tseng_BCC_optimal}.}
\footnote{\normalsize This research is supported in part by National Science Foundation award CNS 1059540. Any opinions, findings, and conclusions or recommendations expressed here are those of the authors and do not necessarily reflect the views of the funding agencies or the U.S. government.}}

\author{Lewis Tseng$^{1,3}$, and Nitin Vaidya$^{2,3}$\\~\\
 \normalsize $^1$ Department of Computer Science,\\
 \normalsize $^2$ Department of Electrical and Computer Engineering, 
 and\\ \normalsize $^3$ Coordinated Science Laboratory\\ \normalsize University of Illinois at Urbana-Champaign\\ \normalsize Email: \{ltseng3, 
nhv\}@illinois.edu \\~\\ \normalsize Technical Report} 

\date{July 3rd, 2013}
\maketitle

\begin{abstract}
{\normalsize

Much of the past work on asynchronous approximate Byzantine consensus has
assumed {\em scalar} inputs at the nodes \cite{AA_Dolev_1986, AA_nancy}.
Recent work has yielded approximate Byzantine consensus algorithms
for the case when the input at each node is a $d$-dimensional vector,
and the nodes must reach consensus on
a vector in the convex hull
of the input vectors at the fault-free nodes
\cite{herlihy_multi-dimension_AA,Vaidya_BVC}.
The $d$-dimensional vectors can be equivalently viewed as {\em points}
in the $d$-dimensional Euclidean space.
Thus, the algorithms in \cite{herlihy_multi-dimension_AA, Vaidya_BVC}
require the fault-free nodes to decide on a point
in the $d$-dimensional space.

~

In this paper, we generalize the problem to allow the decision to be
a {\em convex polytope} in the $d$-dimensional space, such that the decided
polytope is within the convex hull of the input vectors at the fault-free nodes.
We name this problem as {\em Byzantine convex consensus} (BCC), and present an
asynchronous approximate BCC algorithm with optimal fault tolerance.
Ideally, the goal here is to agree on a convex polytope that
is as large as possible.
While we do not claim that our algorithm satisfies this goal,
we show a bound on the output convex polytope chosen by our algorithm.

}
\end{abstract}

\thispagestyle{empty}
\newpage
\setcounter{page}{1}

\section{Introduction}
\label{s_intro}

Much of the past work on asynchronous approximate Byzantine consensus has
assumed {\em scalar} inputs at the nodes \cite{AA_Dolev_1986, AA_nancy}.
Recent work has yielded approximate Byzantine consensus algorithms
for the case when the input at each node is a $d$-dimensional vector,
and the nodes must reach consensus on
a vector in the convex hull
of the input vectors at the fault-free nodes
\cite{herlihy_multi-dimension_AA,Vaidya_BVC}.
The $d$-dimensional vectors can be equivalently viewed as {\em points}
in the $d$-dimensional Euclidean space.
Thus, the algorithms in \cite{herlihy_multi-dimension_AA, Vaidya_BVC}
require the fault-free nodes to decide on a point
in the $d$-dimensional space.
In this paper, we generalize the problem to allow the decision to be
a {\em convex polytope} in the $d$-dimensional space, such that the decided
polytope is within the convex hull of the input vectors at the fault-free nodes.
We name this problem as {\em Byzantine convex consensus} (BCC), and present an
asynchronous approximate BCC algorithm with optimal fault tolerance.

We consider Byzantine convex consensus (BCC) in an {\em asynchronous} system consisting of $n$ nodes,
of which at most $f$ may be Byzantine faulty. The Byzantine faulty nodes may behave
in an arbitrary fashion,
and may collude with each other. Each node $i$ has a $d$-dimensional vector of reals as its {\em input} $x_i$.
All nodes can communicate with each other directly on reliable and FIFO (first-in first-out)
channels.  Thus, the underlying communication graph can be modeled as a {\em complete graph}, 
with the set of nodes being $V=\{1,2,\cdots, n\}$.
The impossibility of {\em exact} consensus in asynchronous systems \cite{FLP_one_crash}
applies to BCC as well. Therefore, we consider 
the {\em Approximate BCC} problem with the following requirements:

\begin{itemize}

\item \textbf{Validity}: The {\em output} (or {\em decision}) at each fault-free node must be
a convex polytope in the convex hull of the $d$-dimensional input vectors at the fault-free nodes.
(In a degenerate case, the output polytope may simply be a single {\em point}.)

\item \textbf{Approximate Agreement}: For any $\epsilon > 0$, the {\em Hausdorff distance} (defined below) between the output polytopes at any two fault-free nodes must be at most $\epsilon$.


\end{itemize}
Ideally, the fault-free nodes should reach consensus on the largest
possible convex polytope that satisfies the validity constraint. We present an optimal algorithm that agrees on a convex polytope that is as {\em large} as possible under adversarial conditions in our follow-up work \cite{Tseng_BCC_optimal}.
The motivation behind reaching consensus on a convex polytope is that
a solution to BCC is expected to also facilitate solutions to a large
range of consensus problems (e.g., Byzantine vector consensus \cite{herlihy_multi-dimension_AA,Vaidya_BVC}, or convex function optimization
over a convex hull of the inputs at fault-free nodes).
Future work will explore these potential applications.

To simplify the presentation,
we do not include a {\em termination} condition above.
We instead prove that approximate agreement condition is {\em eventually} satisfied
by our proposed algorithm (in addition to validity).
However, we can augment the proposed algorithm 
 using techniques similar to those in \cite{abraham_04_3t+1_async,herlihy_multi-dimension_AA,Vaidya_BVC},
to terminate within a finite number of rounds.

\begin{definition}
\label{def:dist}
For two convex polytopes $h_1, h_2$, the {\em Hausdorff distance} is defined as \cite{Hausdorff}
\[
\D(h_1, h_2) ~~=~~ \max~~ \{~~ \max_{p_1 \in h_1}~\min_{p_2 \in h_2} d(p_1, p_2),~~~~ \max_{p_2 \in h_2}~\min_{p_1 \in h_1} d(p_1, p_2)~~\}
\]
\end{definition}
where $d(p,q)$ is the Euclidean distance between points $p$ and $q$.

\paragraph{Lower Bound on $n$:}
As noted above,
\cite{herlihy_multi-dimension_AA, Vaidya_BVC} consider the problem of
reaching approximate Byzantine consensus on a
vector (or a point) in the convex hull of the $d$-dimensional
input vectors at the fault-free nodes, and show that
$n \geq (d+2)f+1$ is necessary.
\cite{herlihy_colorless_async} generalizes
the same lower bound to colorless tasks.
The lower bound proof in
\cite{herlihy_multi-dimension_AA, Vaidya_BVC} also implies that
$n \geq (d+2)f+1$ is necessary to ensure that BCC is solvable.
We do not reproduce the lower bound proof here, but
in the rest of the paper, we assume that $n\geq (d+2)f+1$,
and also that $n\geq 2$ (because consensus is trivial when $n=1$).

\section{Preliminaries}
\label{s_ops}

Some notations introduced throughout the paper are summarized in Appendix \ref{app_s_notations}.
In this section,
we introduce operations $\HH$, $H_l$, $H$,
and a {\em reliable broadcast} primitive used later in the paper.
\begin{definition}
\label{def:hh}
Given a set of points $X$, $\HH(X)$ is defined
as the convex hull of the points in $X$.
\end{definition}
\begin{definition}
\label{def:linear_hull}
Suppose that $\nu$ convex polytopes $h_1, h_2,\cdots, h_\nu$, and $\nu$ constants $c_1, c_2, \cdots, c_\nu$
are given such that
(i) $0 \leq c_i \leq 1$ and $\sum_{i = 1}^\nu c_i = 1$,
and (ii) for $1\leq i\leq\nu$, if $c_i\neq 0$, then $h_i\neq\emptyset$.
Linear combination of these convex
polytopes, $H_l(h_1, h_2,\cdots, h_\nu;~c_1, c_2, \cdots, c_\nu)$,
 is defined as follows:
\begin{itemize}
\item Let $Q := \{ i ~|~ c_i\neq 0, ~ 1\leq i\leq \nu \}$.
\item 
$p \in H_l(h_1, h_2,\cdots, h_\nu;~c_1, c_2,\cdots, c_\nu)$ if and only if
\begin{equation}
\label{eq:linear_hull}
\text{for each~} i\in Q,
\text{~there exists~} p_i \in h_i, ~~\text{such that}~~p = \sum_{i \in Q} c_i p_i
\end{equation}
\end{itemize}
\end{definition}
Note that a convex polytope may possibly consist of a single point.
Because $h_i$'s above are all convex, $H_l(h_1, h_2,\cdots, h_\nu;~c_1, c_2,\cdots, c_\nu)$ is also a
convex polytope (proof included in Appendix \ref{a:linear_convex} for completeness).
The parameters for $H_l$ consist of two lists, a list of polytopes
$h_1,\cdots,h_\nu$, and a list of weights $c_1,\dots,c_\nu$.  With an abuse of notation, we 
will specify one or both of these lists as either a {\em row vector}
or a {\em multiset}, with the understanding that the {\em row vector} or
{\em multiset} here represents an ordered list of its elements.

Function $H$ below is called in our algorithm with parameters $(\VV,t)$ wherein
$t$ is a round index ($t\geq 0$) and $\VV$ is a set of tuples of the
form $(h,j,t-1)$, where $j$ is a node identifier; when $t=0$, $h$ is a point
in the $d$-dimensional Euclidean space, and when $t>0$, $h$
is a convex polytope.

\vspace*{2pt}

\noindent
\hrule
{\bf Function $H(\VV,t)$}
\begin{enumerate}
\item Define multiset $X:=\{h~|~ (h,j,t-1)\in \VV\}$.
In our use of function $H$, each $h\in X$ is always non-empty.

\item If $t=0$ then
${\tt temp}~:= ~ \cap_{\,C \subseteq X, |C| = |X| - f}~~\HH(C)$.\\
In our use of function $H$, when $t=0$, each $h\in X$ is simply a point.
The intersection above is over the convex hulls of the subsets of $X$ of size $|X| - f$.
\item If $t>0$ then ${\tt temp}~:=~ H_l(X; \frac{1}{|X|}, \cdots,\frac{1}{|X|})$.
~~Note that all the weights here are equal to $\frac{1}{|X|}$.
\item Return {\tt temp}. 
\end{enumerate}
\hrule

\paragraph{{\em Reliable Broadcast}\, Primitive:}

We will use the {\em reliable broadcast} primitive
from \cite{abraham_04_3t+1_async}, which is also used in other related work \cite{herlihy_multi-dimension_AA, Vaidya_BVC}.
\begin{itemize}
\item
As seen later, our algorithm proceeds in asynchronous rounds, and the
nodes perform reliable broadcast of messages that each consist of
a 3-tuple of the form $(v,i,t)$: here $i$ denotes the sender node's identifier, $t$
is round index, and $v$ is message value (the value $v$ itself is often a tuple).
The operation $\RBSend(v,i,t)$ is used by node $i$ to perform {\em reliable broadcast} of $(v,i,t)$  in round $t$.
\item
When message $(v,i,t)$ is {\em reliably received} by some node $j$,
the event $\RBRecv(v,i,t)$ is said to have occurred at node $j$
 (note that $j$ may possibly be equal to $i$).
The second element in a reliably received message 3-tuple,
namely $i$ above, is always identical to the identifier
of the node that performed the corresponding reliable broadcast.
When we say that node $j$ reliably receives $(v,i,t)$ we mean that event $\RBRecv(v,i,t)$
occurs at node $j$.
\end{itemize}
Each fault-free node performs one reliable broadcast ($\RBSend$) in each round of our algorithm.
The reliable broadcast primitive has the properties listed below,
as proved previously \cite{abraham_04_3t+1_async,herlihy_colorless_async}.
\begin{itemize}
\item {\bf Fault-Free Integrity:} If a fault-free node $i$ {\em never} reliably broadcasts $(v, i, t)$,
then no fault-free node ever reliably receives $(v,i,t)$.

\item {\bf Fault-Free Liveness:} If a fault-free node $i$ performs reliable broadcast of $(v, i, t)$, then each fault-free node eventually reliably receives $(v, i, t)$.

\item {\bf Global Uniqueness:} If two fault-free nodes $i, j$ reliably receive $(v, k, t)$ and $(w, k, t)$, respectively, then $v = w$, even if node $k$ is faulty.

\item {\bf Global Liveness:} For any two fault-free nodes $i, j$, if $i$ reliably receives $(v, k, t)$, then $j$ will eventually reliably receive $(v, k, t)$, even if node $k$ is faulty.

\end{itemize}

\section{Proposed Algorithm: Verified Averaging}
\label{s_alg}

The proposed algorithm (named {\em Verified Averaging}) proceeds in asynchronous rounds.
The input at each node $i$ is a $d$-dimensional vector of reals, denoted as $x_i$.
 In each round $t~(t \geq 0)$, each node $i$ computes a state variable $h_i$, which represents a convex polytope in the $d$-dimensional Euclidean space. We will refer to the value of $h_i$ at the {\em end} of the $t$-th round performed by node $i$ as $h_i[t]$, $t\geq 0$. Thus, for $t\geq 1$, $h_i[t-1]$ is the value of $h_i$ at the {\em start} of the $t$-th round at node $i$.

Motivated by previous work that uses a mechanism to simulate omission failures in
presence of 
Byzantine faults \cite{Welch_textbook}, our algorithm uses a similar technique,
named {\em verification}. Informally,
the {\em verification} mechanism ensures that 
if a faulty node deviates from the algorithm specification
(except possibly choosing an invalid input vector),
then its incorrect messages will be ignored by
the fault-free nodes.
Thus, aside from choosing a bad input, a faulty node cannot cause any other damage
to the execution.
Before we present the algorithm, we introduce a convention for the brevity of presentation:
\begin{itemize}
\item When we say that $(*,i,t) \in \VV$, we mean that {\em there exists $z$ such that $(z,i,t)\in\VV$}.

\item When we say that $(*,i,t) \not\in \VV$, we mean that {\em $\forall z$, $(z,i,t)\notin\VV$}.
\end{itemize}
The proposed {\em Verified Averaging} algorithm for node $i\in V$ is presented below.
All references to line numbers in our discussion refer to numbers listed
on the right side of the algorithm pseudo-code.
Whenever a message is reliably received by any node, a handler is called
to process that message. Handlers for multiple reliably received messages
may execute {\em concurrently} at a given node. For correct behavior, lines 3-7 below
are executed {\bf atomically}, and similarly, lines 11-16 are executed
{\bf atomically}.

In the proposed algorithm, in round 0, each node $i$ uses $\RBSend$ to reliably broadcast
$(x_i,i,0)$ where $x_i$ is its
input (line 1). 

Lines 2-7 specify the event handler for event $\RBRecv(x,j,0)$ at node $i$.
Whenever a new message of the form $(x,j,0)$ is reliably received by node $i$
(line 2), 
the set $\R_i[0]$ is updated (line 3).
Note that the message received by node $i$ on line 2 may possibly have been reliably
broadcast by node $i$ itself.
When size of set 
$\R_i[0]$ becomes at least $n-f$ for the first time (line 4),
node $i$ computes $h_i[0]$ (line 6); the $\R$ set used for
computing $h_i[0]$ is saved as $\R^c_i[0]$ (line 5).
Having computed $h_i[0]$, node $i$ can proceed to round 1 (line 7).
Note that new messages may still be added to $\R_i[0]$ afterwards
whenever event of the form $\RBRecv(x_j,j,0)$ occurs.
Thus, $\R_i[0]$ may continue to grow even after node $i$ has proceeded
to round 1; however, $\R^c_i[0]$ is not modified again.
Recall that lines 3-7 are performed atomically.

On entering
round $t$, $t\geq 1$, each node $i$ reliably broadcasts
$((h_i[t-1],\R^c_i[t-1]),i,t)$ (line 8).

Lines 9-16 specify the event handler for event $\RBRecv((h,\VV),j,t)$ at node $i$.
Whenever a message of the form $((h,\VV),j,t)$ is reliably received from node $j$ (line 10),
node $i$ first waits until its own set $\R_i[t-1]$ becomes
large enough to contain $\VV$. Note that $\R_i[t-1]$ is initially computed
in round $t-1$, but it may continue to grow even after node $i$ proceeds
to round $t$.
If the condition $\VV\subseteq\R_i[t-1]$ never becomes true, then this
message is not processed further.
The message $((h,\VV),j,t)$ is considered correct if all the following
conditions are true: (i) $\VV\subseteq \R_i[t-1]$,
(ii) $|\VV|\geq n-f$,
(iii) $(*,j, t-2)\in \VV$, and
(iv) $h=H(\VV,t-1)$.
Conditions (ii), (iii) and (iv) are tested in line 11,
and node $i$ does not reach line 11 until condition (i)
becomes true at line 10.
If the message $((h,\VV),j,t)$ is considered correct, then $(h,j,t-1)$ it is added to $\R_i[t]$ (line 12).

When both conditions $|\R_i[t]|\geq n-f$ and $(h_i[t-1],i,t-1)\in \R_i[t]$
are true for the first time (line 13),
node $i$ computes $h_i[t]$ (lines 14 and 15), and then proceeds to round $t+1$.
 Similar to round 0, the set used
in computing $h_i[t]$ is saved as $\R^c_i[t]$ (line 14).  
While $\R^c[t]$ remains unchanged afterwards, $\R_i[t]$ may continue to grow, even
after node $i$ proceeds to round $t+1$, if new round $t$ messages are reliably received
later.
Recall that lines 11-16 are performed atomically.

\vspace*{8pt}
\hrule

\vspace*{2pt}

\noindent {\bf Verified Averaging Algorithm:} Steps at node $i$

\vspace*{4pt}

\hrule

\vspace*{8pt}

\noindent {\bf Initialization:} All sets used below are initialized to $\emptyset$.\\

\noindent {\bf Round 0:} 	

\begin{itemize}
\item $\RBSend(x_i,i,0)$ 		\hfill 1

\item
{\bf Event handler for event $\RBRecv(x, j, 0)$ at node $i$} : \hfill 2\\

\centerline{\underline{\em Lines 3-7 are performed atomically.}}

\begin{list}{}{}
\item[\hspace*{0.4in}$-$] $\R_i[0] := \R_i[0] \cup \{(x,j,-1)\}$ \hfill 3 \\

\indent {\em Comment}: The third element of the 3-tuple added to $\R$ above is set\\
 \indent as $-1$ to facilitate consistent treatment in round 0 and rounds $t>0$. \\

\item[\hspace*{0.4in}$-$] When $|\R_i[0]| \geq n-f$ \& $(x_i,i,-1) \in \R_i[0]$  \underline{both true for the first time},	\hfill 4 

	\hspace*{1in} $\R^c_i[0]:= \R_i[0]$ \hfill 5

	\hspace*{1in} $h_i[0] := H(\R^c_i[0],0)$ \hfill 6

	\hspace*{1in} Proceed to round 1  \hfill 7 
\end{list}

\end{itemize}

\noindent {\bf Round $t ~ (t \geq 1)$:} 

\begin{itemize}
\item $\RBSend((h_i[t-1], \R^c_i[t-1]), i, t)$ \hfill 8

\item
{\bf Event handler for event $\RBRecv((h,\VV), j, t)$ at node $i$} : \hfill 9

\begin{itemize}

\item Wait until $\VV \subseteq \R_i[t-1]$ \hfill 10 \\

\centerline{\underline{\em Lines 11-16 are performed atomically.}}

\item
If $|\VV|\geq n-f$ and $(*,j,t-2)\in \VV$ and $h = H(\VV,t-1)$  \hfill 11 \\
then $\R_i[t] := \R_i[t] \cup \{(h,j,t-1)\}$ \hfill 12 \\ 

\item
When $|\R_i[t]| \geq n-f$ \& $(h_i[t-1],i,t-1) \in \R_i[t]$ \underline{both true for first time} \hfill 13

	\hspace*{1.6in} $\R^c_i[t]:= \R_i[t]$ \hfill 14

	\hspace*{1.6in} $h_i[t] := H(\R^c_i[t],t)$ \hfill 15

	\hspace*{1.6in} Proceed to round $t+1$  \hfill 16 

\end{itemize}
\end{itemize}

\hrule

\vspace*{8pt}

~



\begin{definition}
\label{d_verified}
A node $k$'s execution of round $r$, $r\geq 0$, is said to be
\underline{\bf verified by a fault-free node $i$} if,
eventually node $i$ reliably receives message
of the form $((h,\VV),k,r+1)$ from node $k$, and subsequently adds $(h,k,r)$ to $\R_i[r+1]$. 
Note that node $k$ may possibly be faulty.
Node $k$'s execution of round $r$ is said to be \underline{\bf verified} if it
is verified by at least one fault-free node.
\end{definition}
We now introduce some more notations (which are also summarized in Appendix \ref{app_s_notations}):
\begin{itemize}
\item For a {\em given} execution of the proposed {\em Verified Averaging} algorithm, let $F$ denote the {\em actual} set of faulty nodes in the execution.
Let $|F|=\phi$. Thus, $0\leq \phi\leq f$.
\item For $r\geq 0$, let $F_v[r]$ denote the set of faulty nodes whose round $r$ execution is verified by at least one fault-free node, as per Definition \ref{d_verified}.
Note that $F_v[r]\subseteq F$.
\item Define $\overline{F_v}[r]=F-F_v[r]$, for $r\geq 0$.
\end{itemize}
For each faulty node $k\in F_v[r]$, by
	Definition \ref{d_verified}, there must exist
	a fault-free node $i$ that eventually reliably receives
	a message of the form $((h,\VV),k,$r+1$)$ from
	node $k$, and adds $(h,k,r)$ to $\R_i[r+1]$.
	Given these $h$ and $\VV$, for future reference, let us define 
	\begin{eqnarray}
	 h_k[r]&=&h \label{e_faulty_h} \\
	 \R^c_k[r]&=&\VV \label{e_faulty_V}
	\end{eqnarray}
	Node $i$ verifies node $k$'s round $r$ execution after node $i$
	has entered its round $r+1$.
	Since round $r$ execution of faulty node $k$ above is verified by fault-free
	node $i$, due to the check performed by node $i$ at line 11, the equality below holds
	for $h_k[r]$ and $\R^c_k[r]$ defined in (\ref{e_faulty_h}) and (\ref{e_faulty_V}).
	\begin{eqnarray}
	h_k[r]& = & H(\R^c_k[r],r) \label{e_faulty_H}
	\end{eqnarray}
(The proof of Claim \ref{c_ver} in Appendix \ref{a_claims} elaborates on the
above equality.)
	While the algorithm requires each node $k$ to maintain
	variables $h_k[r]$ and $\R^c_k[r]$, we cannot assume correct behavior on
	the part of the faulty nodes. However, from the perspective of
	each fault-free node that verifies the round $r$ execution of faulty node $k\in F_v[r]$,
	node $k$ behaves ``as if'' these local variable take the values specified
	in (\ref{e_faulty_h}) and (\ref{e_faulty_V}) that satisfy (\ref{e_faulty_H}). 
	Note that if the round $r$ execution of a faulty node $k$ is verified by more than
	one fault-free node, due to the {\em Global Uniqueness} of reliable broadcast,
	all these fault-free nodes must have reliably received identical round $r+1$ messages
	from node $k$.  

Proofs of Lemmas \ref{l_progress}, \ref{lemma:J_in_H0} and \ref{lemma:always_notFv_not_verified}
below
are presented in Appendices \ref{a_lemma_progress}, \ref{a_lemma_J}, and
\ref{a_always}, respectively.
These lemmas are used to prove the correctness of the {\em Verified Averaging} algorithm.


\begin{lemma}
\label{l_progress}
If all the fault-free nodes progress to the start of round $t$, $t \geq 0$, then all the
fault-free nodes will eventually progress to the start of round $t+1$.
\end{lemma}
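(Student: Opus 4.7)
The plan is to split on whether $t=0$ or $t\geq 1$. For $t=0$, I would apply Fault-Free Liveness directly: every fault-free $j$ has executed $\RBSend(x_j,j,0)$ on line~1, so every fault-free $i$ eventually sees $\RBRecv(x_j,j,0)$ and adds $(x_j,j,-1)$ to $\R_i[0]$ on line~3. Since at least $n-f$ nodes are fault-free and $i$ is one of them, both conjuncts on line~4---size at least $n-f$ and $(x_i,i,-1)\in \R_i[0]$---eventually hold, so $i$ executes lines~5--7 and enters round~$1$.

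For $t\geq 1$, the real obstacle is the wait on line~10: when a fault-free $i$ reliably receives $((h_j[t-1],\R^c_j[t-1]),j,t)$ from a fault-free $j$, the condition $\R^c_j[t-1]\subseteq \R_i[t-1]$ must eventually hold, even though $\R^c_j[t-1]$ may carry tuples that originated at faulty nodes. I would isolate this in a sub-lemma: for all fault-free $i,j$ and all $r\geq 0$, if $(h,k,r-1)$ is ever placed in $\R_j[r]$, then eventually $(h,k,r-1)\in \R_i[r]$. I would prove this by induction on $r$. The base $r=0$ is immediate: $j$'s inclusion of $(h,k,-1)$ means $j$ reliably received $(h,k,0)$, so by Global Liveness $i$ also eventually reliably receives $(h,k,0)$ and adds $(h,k,-1)$ at line~3. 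For the step $r\geq 1$, $j$'s inclusion means $j$ reliably received some $((h,\VV'),k,r)$ that passed the line~11 checks; Global Liveness delivers the same message to $i$, and the inductive hypothesis applied element-wise to $\VV'\subseteq \R_j[r-1]$ yields $\VV'\subseteq \R_i[r-1]$ eventually, so $i$ clears line~10 and then passes the remaining (message-only, deterministic) checks on line~11, adding $(h,k,r-1)$ to $\R_i[r]$.

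With the sub-lemma in hand, the step from $t$ to $t+1$ for $t\geq 1$ falls out. By hypothesis every fault-free $j$ has entered round $t$ and thus executed line~8; Fault-Free Liveness delivers $((h_j[t-1],\R^c_j[t-1]),j,t)$ to every fault-free $i$; the sub-lemma gives $\R^c_j[t-1]\subseteq \R_i[t-1]$ eventually, resolving the wait on line~10; and the line~11 checks for $j$'s message are met by $j$'s honest behavior, namely $|\R^c_j[t-1]|\geq n-f$, $(*,j,t-2)\in \R^c_j[t-1]$, and $h_j[t-1]=H(\R^c_j[t-1],t-1)$, each enforced at lines~4--6 or lines~13--15 when $j$ itself formed $\R^c_j[t-1]$. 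Hence $(h_j[t-1],j,t-1)\in \R_i[t]$ for every fault-free $j$, yielding $|\R_i[t]|\geq n-f$ and, taking $j=i$, $(h_i[t-1],i,t-1)\in \R_i[t]$; the guard on line~13 then fires and $i$ proceeds to round $t+1$. The hardest part is the inductive sub-lemma: resolving the line~10 wait for a message whose $\VV'$ transitively references faulty-node tuples forces one to chain Global Liveness across every earlier round, and the induction must be on the round index $r$ rather than on any per-round notion of progress.
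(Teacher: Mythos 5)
Your proposal is correct and follows essentially the same route as the paper: the paper also handles round $0$ via Fault-Free Liveness and, for $t\geq 1$, isolates exactly your sub-lemma as Claim~\ref{c_common_verify} (if a fault-free node ever places $(h,k,t-1)$ in its $\R[t]$, every fault-free node eventually does), proved by the same induction on the round index using Global Liveness and the element-wise inductive hypothesis to clear the line~10 wait. The subsequent derivation of the line~13 guard from the honest behavior of each fault-free sender matches the paper's argument as well.
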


\begin{lemma}
\label{lemma:J_in_H0}
For each node $i\in V-\overline{F_v}[0]$, the polytope $h_i[0]$ is non-empty.
\end{lemma}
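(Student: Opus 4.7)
The plan is to reduce the claim to a single application of Helly's theorem in $\mathbb{R}^d$.

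First, I would unpack what $h_i[0]$ is for each $i\in V-\overline{F_v}[0]$. If $i$ is fault-free, lines 4--6 of the algorithm give $h_i[0]=H(\R^c_i[0],0)$ with $|\R^c_i[0]|\geq n-f$. If $i$ is instead a faulty node in $F_v[0]$, then some fault-free node $j$ verifies its round~0 execution, so by equations~(\ref{e_faulty_h})--(\ref{e_faulty_H}) together with the verifier's check at line~11 ($|\VV|\geq n-f$ and $h=H(\VV,0)$), we again have $h_i[0]=H(\R^c_i[0],0)$ with $|\R^c_i[0]|\geq n-f$. Writing $X:=\{\,x\mid(x,\ell,-1)\in\R^c_i[0]\,\}$ as a multiset of $d$-dimensional points (with $|X|=|\R^c_i[0]|\geq n-f$), the $t=0$ branch of $H$ gives
\[
h_i[0] ~=~ \bigcap_{C\subseteq X,~|C|=|X|-f} \HH(C).
\]

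Next I would invoke Helly's theorem, which states that a finite family of convex sets in $\mathbb{R}^d$ has non-empty intersection provided every $d+1$ of them do. Since each $\HH(C)$ is convex, it suffices to verify the $(d+1)$-wise property. Fix any $d+1$ subsets $C_1,\ldots,C_{d+1}\subseteq X$ each of size $|X|-f$. Because $|X\setminus C_k|=f$ for each $k$, a union bound yields
\[
|X\setminus(C_1\cap\cdots\cap C_{d+1})| ~\leq~ \sum_{k=1}^{d+1}|X\setminus C_k| ~=~ (d+1)f,
\]
so $|C_1\cap\cdots\cap C_{d+1}|\geq |X|-(d+1)f\geq(n-f)-(d+1)f=n-(d+2)f\geq 1$, invoking the standing assumption $n\geq(d+2)f+1$. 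Any common element of $C_1,\dots,C_{d+1}$ lies in every $\HH(C_k)$, so the $(d+1)$-wise intersection of the convex hulls is non-empty. Helly's theorem then delivers that $h_i[0]$ itself is non-empty.

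I do not anticipate a serious obstacle: the argument hinges on the counting bound $n-(d+2)f\geq 1$, which is exactly the resilience threshold, and it is reassuring that the hypothesis surfaces precisely where needed. The only minor bookkeeping point is that $X$ may be a multiset rather than a set if distinct senders submitted identical input vectors, but this does not affect either the union-bound computation of $|X\setminus C_k|$ or the application of Helly's theorem to the family $\{\HH(C)\}$.
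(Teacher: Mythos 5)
Your proof is correct, and it reaches the conclusion by a genuinely different route than the paper. The paper's proof invokes Tverberg's theorem: since $|X|\geq n-f\geq (d+1)f+1$, the multiset $X$ can be partitioned into $f+1$ parts whose convex hulls have a common nonempty intersection $J$; each subset $C$ omits only $f$ elements of $X$, hence wholly contains at least one part of the partition, so every $\HH(C)$ contains $J$ and the full intersection is nonempty. You instead combine a union-bound count with Helly's theorem: any $d+1$ of the sets $C$ jointly omit at most $(d+1)f$ elements, so they share a common point of $X$ because $|X|-(d+1)f\geq n-(d+2)f\geq 1$, and Helly then upgrades the $(d+1)$-wise intersection property to global nonemptiness. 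Both arguments spend the resilience bound $n\geq(d+2)f+1$ in essentially the same place, and your handling of the two cases ($i$ fault-free versus $i\in F_v[0]$, via the verifier's line-11 check) matches the paper's. Your approach is arguably more elementary, resting on the better-known Helly's theorem rather than Tverberg's; the paper's approach has the side benefit of exhibiting an explicit nonempty polytope $J$ (the Tverberg intersection) contained in every $\HH(C)$, i.e., a concrete witness inside $h_i[0]$, though the paper does not exploit that extra information in this lemma. Two trivial edge cases are worth a sentence if you write this up: when the family $\{\HH(C)\}$ has at most $d+1$ members (e.g., $f=0$), your union bound already shows all the $C$'s share a point, so Helly is not even needed; and, as you note, treating $X$ as indexed by sender identifiers disposes of the multiset bookkeeping.
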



\begin{lemma}
\label{lemma:always_notFv_not_verified}
For $r\geq 0$,
if $b \in \overline{F_v}[r]$, then for all $\tau\geq r$,
\begin{itemize}
\item $b \in \overline{F_v}[\tau]$, and 
\item for all $i\in V-\overline{F_v}[\tau+1]$, $(*,b,\tau)\not\in\R^c_i[\tau+1]$.
\end{itemize}
\end{lemma}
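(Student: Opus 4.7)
The plan is an induction on $\tau$, starting from the given $\tau = r$. Two ingredients drive the argument. First, an \emph{observation}: for any node $k$ and any fault-free node $j$, the tuple $(*, k, \tau)$ can enter $\R_j[\tau+1]$ only through line 12 of the round-$(\tau+1)$ handler, and by Definition~\ref{d_verified} that happens exactly when $j$ verifies $k$'s round $\tau$ execution; hence if $k$ is faulty and $k \in \overline{F_v}[\tau]$, no fault-free $j$ ever has $(*, k, \tau) \in \R_j[\tau+1]$. Second, an \emph{inclusion}: for every $i \in V - \overline{F_v}[\tau+1]$ there is a fault-free $j^{*}$ with $\R^c_i[\tau+1] \subseteq \R_{j^{*}}[\tau+1]$. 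If $i$ is fault-free, take $j^{*} = i$ (line 14 freezes $\R^c_i$ to a snapshot of $\R_i$, and $\R_i$ only grows). If $i$ is faulty with $i \in F_v[\tau+1]$, take $j^{*}$ to be any fault-free node that verifies $i$'s round $\tau+1$ execution: by (\ref{e_faulty_V}) we have $\R^c_i[\tau+1] = \VV$, and the line-10 wait at $j^{*}$ forces $\VV \subseteq \R_{j^{*}}[\tau+1]$.

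Given these two ingredients, the base case $\tau = r$ is immediate: $b \in \overline{F_v}[r]$ together with the observation gives $(*, b, r) \notin \R_{j^{*}}[r+1]$ for every fault-free $j^{*}$, and the inclusion then yields $(*, b, r) \notin \R^c_i[r+1]$ for every $i \in V - \overline{F_v}[r+1]$. For the inductive step, assume the statement for $\tau$; the key is to first establish $b \in \overline{F_v}[\tau+1]$. Suppose for contradiction that a fault-free node $j^{*}$ verifies $b$'s round-$(\tau+1)$ execution: it reliably receives some $((h, \VV), b, \tau+2)$, the line-10 wait gives $\VV \subseteq \R_{j^{*}}[\tau+1]$, and the line-11 check $(*, b, (\tau+2)-2) \in \VV$ gives $(*, b, \tau) \in \VV \subseteq \R_{j^{*}}[\tau+1]$, contradicting the observation applied to the inductive hypothesis $b \in \overline{F_v}[\tau]$. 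Once $b \in \overline{F_v}[\tau+1]$ is secured, the $\R^c$ claim for $\tau+1$ follows exactly as in the base case, shifted up by one round.

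The main subtlety I expect to handle carefully is that line 10 guarantees $\VV \subseteq \R_{j^{*}}[\tau+1]$ but not $\VV \subseteq \R^c_{j^{*}}[\tau+1]$, since $\R_j[\cdot]$ can continue to grow after $\R^c_j[\cdot]$ is frozen. Consequently the inductive step for $b \in \overline{F_v}[\tau+1]$ cannot be closed using only the $\R^c$ clause of the hypothesis; one must route through the full sets $\R_{j^{*}}[\tau+1]$ via the line-12 observation, which is precisely what the ``$b \in \overline{F_v}[\tau]$'' clause of the hypothesis supports. This is also what makes the line-11 check $(*, j, t-2) \in \VV$ indispensable: it is the mechanism by which ``not verified in round $\tau$'' propagates to ``not verified in round $\tau+1$''.
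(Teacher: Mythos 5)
Your proof is correct and takes essentially the same approach as the paper: the paper factors the argument into a backward-propagation claim (``verified in round $t$ implies verified in round $t-1$,'' Claims~\ref{c_ver}(iii) and \ref{claim:j_verified}) plus a separate claim for the $\R^c$ statement (Claim~\ref{claim:notFv_not_verified}), while you run the contrapositive as a single forward induction on $\tau$. The underlying mechanism — the line-11 check $(*,b,t-2)\in\VV$ combined with the line-10 wait $\VV\subseteq\R_{j^*}[t-1]$, and the case split on whether $i$ is fault-free or verified-faulty to get $\R^c_i\subseteq\R_{j^*}$ — is identical.
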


\section{Correctness}
\label{s_correctness}

We first
introduce some terminology and definitions related to matrices. Then, we develop a
{\em transition matrix} representation of the proposed algorithm, and use
that to prove its correctness.

\subsection{Matrix Preliminaries}

We use boldface upper case letters to denote matrices, rows of matrices, and their elements. For instance, $\bfA$ denotes a matrix, $\bfA_i$ denotes the $i$-th row of matrix $\bfA$, and $\bfA_{ij}$ denotes the element at the intersection of the $i$-th row and the $j$-th column of matrix $\bfA$.

\begin{definition}
\label{d_stochastic}
A vector is said to be stochastic if all its elements
are non-negative, and the elements add up to 1.
A matrix is said to be row stochastic if each row of the matrix is a
stochastic vector. 
\end{definition}
For matrix products, we adopt the ``backward'' product convention below, where $a \leq b$,
\begin{equation}
\label{backward}
\Pi_{\tau=a}^b \bfA[\tau] = \bfA[b]\bfA[b-1]\cdots\bfA[a]
\end{equation}
For a row stochastic matrix $\bfA$,
 coefficients of ergodicity $\delta(\bfA)$ and $\lambda(\bfA)$ are defined as
follows \cite{Wolfowitz}:
\begin{eqnarray*}
\delta(\bfA) & = &   \max_j ~ \max_{i_1,i_2}~ \| \bfA_{i_1\,j}-\bfA_{i_2\,j} \| \label{e_zelta} \\
\lambda(\bfA) & = & 1 - \min_{i_1,i_2} \sum_j \min(\bfA_{i_1\,j} ~, \bfA_{i_2\,j}) \label{e_lambda}
\end{eqnarray*}
\begin{claim}
\label{claim_zelta}
For any $p$ square row stochastic matrices $\bfA(1),\bfA(2),\dots, \bfA(p)$, 
\begin{eqnarray*}
\delta(\Pi_{\tau=1}^p \bfA(\tau)) ~\leq ~
 \Pi_{\tau=1}^p ~ \lambda(\bfA(\tau)).
\end{eqnarray*}
\end{claim}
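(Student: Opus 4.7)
The plan is to prove the claim by induction on $p$, reducing everything to two building-block inequalities for row stochastic matrices: (a) a single-matrix bound $\delta(\bfA)\leq\lambda(\bfA)$, and (b) a Hajnal-type submultiplicativity $\delta(\bfB\bfA)\leq \lambda(\bfB)\,\delta(\bfA)$ whenever $\bfB$ is row stochastic. Combined with the fact that products of row stochastic matrices are themselves row stochastic (an easy check from the definition), the claim follows from the backward-product convention in (\ref{backward}).

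For the single-matrix bound, I would fix rows $i_1,i_2$ and a column $j$ and assume without loss of generality that $\bfA_{i_1 j}\geq \bfA_{i_2 j}$. Then
\[
\bfA_{i_1 j}-\bfA_{i_2 j} \;=\; \bfA_{i_1 j}-\min(\bfA_{i_1 j},\bfA_{i_2 j}) \;\leq\; \sum_{k}\bigl(\bfA_{i_1 k}-\min(\bfA_{i_1 k},\bfA_{i_2 k})\bigr) \;=\; 1-\sum_{k}\min(\bfA_{i_1 k},\bfA_{i_2 k}),
\]
where the last equality uses that $\bfA$ is row stochastic. Taking maxima over $i_1,i_2,j$ gives $\delta(\bfA)\leq \lambda(\bfA)$.

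The submultiplicativity in (b) is the main obstacle and is where I would spend the most effort. For rows $i_1,i_2$ and column $j$, I would write
\[
(\bfB\bfA)_{i_1 j}-(\bfB\bfA)_{i_2 j} \;=\; \sum_{k}\bigl(\bfB_{i_1 k}-\bfB_{i_2 k}\bigr)\bfA_{kj},
\]
split the index set according to whether $\bfB_{i_1 k}\geq \bfB_{i_2 k}$ or not, and note that the positive and negative parts both sum to the same value $\gamma=1-\sum_k \min(\bfB_{i_1 k},\bfB_{i_2 k})$ because the two rows of $\bfB$ each sum to one. Upper bounding the positive part with $\gamma\max_k \bfA_{kj}$ and lower bounding the negative part with $\gamma\min_k \bfA_{kj}$ (and repeating with the roles of $i_1,i_2$ swapped) yields
\[
\bigl|(\bfB\bfA)_{i_1 j}-(\bfB\bfA)_{i_2 j}\bigr| \;\leq\; \gamma\cdot\max_{k_1,k_2}\bigl|\bfA_{k_1 j}-\bfA_{k_2 j}\bigr| \;\leq\; \lambda(\bfB)\,\delta(\bfA),
\]
and taking maxima over $i_1,i_2,j$ gives the claim.

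Finally I would combine (a) and (b) inductively. With $\bfP_k:=\Pi_{\tau=1}^{k}\bfA(\tau)=\bfA(k)\bfP_{k-1}$, the base case $p=1$ is exactly (a). For the inductive step, since $\bfA(p)$ is row stochastic, (b) gives $\delta(\bfP_p)\leq \lambda(\bfA(p))\,\delta(\bfP_{p-1})$, and the induction hypothesis on $\bfP_{p-1}$ finishes the product. The only subtlety worth double-checking is that $\bfP_{p-1}$ remains row stochastic so that (b) legitimately applies at each step; this is immediate from the fact that row stochasticity is preserved under multiplication.
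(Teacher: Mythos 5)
Your proof is correct. Note, however, that the paper does not prove Claim \ref{claim_zelta} at all: it simply cites Hajnal's 1958 weak-ergodicity paper \cite{Hajnal58}. What you have written is essentially the classical argument behind that citation --- the two building blocks $\delta(\bfA)\leq\lambda(\bfA)$ and the Hajnal/Seneta contraction inequality $\delta(\bfB\bfA)\leq\lambda(\bfB)\,\delta(\bfA)$, chained by induction under the backward-product convention (\ref{backward}) --- and both blocks check out: the cancellation $\sum_k(\bfB_{i_1k}-\bfB_{i_2k})=0$ splits the sum into positive and negative parts of equal mass $\gamma=1-\sum_k\min(\bfB_{i_1k},\bfB_{i_2k})\leq\lambda(\bfB)$, and bounding $\bfA_{kj}$ above and below by its column extremes gives the factor $\max_{k_1,k_2}\|\bfA_{k_1j}-\bfA_{k_2j}\|\leq\delta(\bfA)$. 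What your route buys over the paper's is self-containedness. One small simplification: your ``subtlety'' at the end is not actually needed --- inequality (b) uses only that the two rows of $\bfB$ have equal row sums, and places no requirement on $\bfA$, so you may apply it to $\bfP_{p-1}$ without verifying that products of row stochastic matrices are row stochastic (that fact is still needed, harmlessly, to invoke the induction hypothesis in the form the claim is stated).
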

Claim \ref{claim_zelta} is proved in \cite{Hajnal58}.
%
%
%
%
Claim \ref{c_lambda_bound} below follows directly from the definition of $\lambda(\cdotp)$. 
\begin{claim}
\label{c_lambda_bound}
If there exists a constant $\gamma$, where $0<\gamma\leq 1$, such
that, for any pair of rows $i,j$ of matrix $\bfA$, there exists a column
$g$ (that may depend on $i,j$) such that,
$\min (\bfA_{ig},\bfA_{jg}) \geq \gamma$,
then
 $\lambda(\bfA)\leq 1-\gamma<1$.
\end{claim}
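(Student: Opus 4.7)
The plan is to apply the definition of $\lambda(\bfA)$ directly. Recall that
\[
\lambda(\bfA) \;=\; 1 - \min_{i_1,i_2} \sum_j \min(\bfA_{i_1 j},\, \bfA_{i_2 j}),
\]
so to upper bound $\lambda(\bfA)$ by $1-\gamma$ it suffices to lower bound $\min_{i_1,i_2} \sum_j \min(\bfA_{i_1 j}, \bfA_{i_2 j})$ by $\gamma$.

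First I would fix an arbitrary pair of row indices $i_1, i_2$. By the hypothesis, there exists some column $g$ (depending on $i_1, i_2$) such that $\min(\bfA_{i_1 g}, \bfA_{i_2 g}) \geq \gamma$. Next, using that $\bfA$ is row stochastic, every entry $\bfA_{i j}$ is non-negative, so each term $\min(\bfA_{i_1 j}, \bfA_{i_2 j})$ in the sum over $j$ is non-negative. Therefore
\[
\sum_j \min(\bfA_{i_1 j},\, \bfA_{i_2 j}) \;\geq\; \min(\bfA_{i_1 g},\, \bfA_{i_2 g}) \;\geq\; \gamma.
\]
Since the pair $(i_1, i_2)$ was arbitrary, taking the minimum over all row pairs preserves the inequality:
\[
\min_{i_1,i_2} \sum_j \min(\bfA_{i_1 j},\, \bfA_{i_2 j}) \;\geq\; \gamma.
\]

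Finally I would plug this bound into the definition of $\lambda(\bfA)$ to conclude $\lambda(\bfA) \leq 1 - \gamma$, and note that $\gamma > 0$ immediately gives $1 - \gamma < 1$. There is no real obstacle here — the claim is essentially a one-line unpacking of the definition, with the only subtlety being the (trivial) observation that non-negativity of the other terms in the column sum lets us restrict attention to the single witness column $g$ guaranteed by the hypothesis.
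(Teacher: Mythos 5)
Your proof is correct and is exactly the direct unpacking of the definition that the paper has in mind; the paper itself gives no proof, stating only that the claim ``follows directly from the definition of $\lambda(\cdot)$.'' Your observation that non-negativity of the entries lets you drop all terms of the column sum except the witness column $g$ is the only step needed, and it is valid since $\lambda$ is defined for row stochastic matrices.
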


Let $\vectorv$ be a column vector with $n$ elements, such that the $i$-th element of vector $\vectorv$, namely $\vectorv_i$, is a convex polytope in the $d$-dimensional Euclidean space. Let $\bfA$ be a $n\times n$ row stochastic square matrix. Then multiplication of
matrix $\bfA$ and vector $\vectorv$ is performed by multiplying each row of $\bfA$
with column vector $\vectorv$ of polytopes. Formally, 
\begin{equation}
\label{eq:multiplication}
\bfA \vectorv = [H_l(\vectorv^T; \bfA_1)
~~~~~H_l(\vectorv^T; \bfA_2)
~~~~~...~~~~~
H_l(\vectorv^T; \bfA_n)]^T
\end{equation}
where $^T$ denotes the transpose operation (thus, $\vectorv^T$ is the transpose of $\vectorv$).
$H_l$ is defined in Definition \ref{def:linear_hull}. Thus, the result of the multiplication
$\bfA\vectorv$ is a column vector consisting of $n$ convex polytopes.
Similarly, product of row vector $\bfA_i$ and above vector $\vectorv$ is
obtained as follows, and it is a polytope.
\begin{eqnarray}
\label{e_r_c}
\bfA_i\vectorv &=& H_l(\vectorv^T\,;~\bfA_i)
\end{eqnarray}

\subsection{Transition Matrix Representation of {\em Verified Averaging}}

\noindent
Let $\vectorv[t]$, $t\geq 0$, denote a column vector of length $|V|=n$.
In the remaining discussion,
we will refer to $\vectorv[t]$ be the state of the system at the end of round $t$.
In particular,
$\vectorv_i[t]$ for $i\in V$ is viewed as
the state of node $i$ at the end of round $t$.
We define $\vectorv[0]$ as follows:
\begin{itemize}
\item[(I1)] For each fault-free node $i\in V-F$, $\vectorv_i[0]:=h_i[0]$.
\item[(I2)] For each faulty node $k\in F_v[0]$, 
	$\vectorv_k[0]:=h_k[0]$, where
	$h_k[0]$ is defined in (\ref{e_faulty_h}).
\item[(I3)] For each faulty node $k\in \overline{F_v}[0]$,
$\vectorv_k[0]$ is {\em arbitrarily} defined as the origin,
or the all-0 vector.
We will justify this arbitrary choice later.
\end{itemize}

We will show that the state evolution 
can be represented in a matrix form as in (\ref{matrix:alg1}), for a suitably
chosen $n\times n$ matrix  $\matrixm[t]$.
$\matrixm[t]$ is said to be the {\em transition matrix} for round $t$.
\begin{equation}
\label{matrix:alg1}
\vectorv[t] = \matrixm[t]~\vectorv[t-1], ~~~~~ t\geq 1
\end{equation}
For all $t\geq 0$,
Theorem \ref{t_M} below proves that,
for each $i\in V-\overline{F_v}[t]$, $h_i[t]=\vectorv_i[t]$.

Given a particular execution of the algorithm, we construct the transition matrix
$\bfM[t]$
for round $t\geq 1$ using
the following procedure.

\vspace*{10pt}
\hrule

\vspace*{2pt}

\noindent {\bf Construction of the Transition Matrix for Round $t~(t \geq 1)$}

\vspace*{4pt}

\hrule

\vspace*{8pt}

\begin{itemize}
\item For each node $i \in V-\overline{F_v}[t]$, and each $k\in V$:

\begin{list}{}{}
\item{} If $(h_k[t-1],k,t-1) \in \R^c_i[t]$, then 
\begin{equation}
\label{eq:matrix_i}
\matrixm_{ik}[t] = \frac{1}{|\R^c_i[t]|}
\end{equation}

\item{} Otherwise,
\begin{equation}
\label{eq:matrix_i-2}
\matrixm_{ik}[t] = 0
\end{equation}

\end{list}

{\em Comment:}
For a faulty node $i\in F_v[t]$, $h_i[t]$ and $\R^c_i[t]$
are defined in (\ref{e_faulty_h}) and (\ref{e_faulty_V}).\\

\item For each node $j \in \overline{F_v}[t]$, 
and each $k \in V$,
\begin{eqnarray}
\matrixm_{jk}[t] &=& \frac{1}{n}
\label{e_fv}
\end{eqnarray}

\end{itemize}

\hrule

\vspace*{8pt}

~

\begin{theorem}
\label{t_M}
For $r\geq 0$,
with state evolution
specified as $\vectorv[r+1]=\bfM[r+1]\vectorv[r]$ using $\bfM[r+1]$ constructed above,
for all $i\in V-\overline{F_v}[r]$,
(i) $h_i[r]$ is non-empty, and
 (ii) $h_i[r]=\vectorv_i[r]$.
\end{theorem}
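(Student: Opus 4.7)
The plan is induction on $r \geq 0$. For the base case $r=0$, (I1) directly yields $\vectorv_i[0] = h_i[0]$ for every fault-free $i$, and for $k \in F_v[0]$ the definition (I2) combined with (\ref{e_faulty_h}) gives $\vectorv_k[0] = h_k[0]$. Non-emptiness of $h_i[0]$ for $i \in V - \overline{F_v}[0]$ is exactly Lemma \ref{lemma:J_in_H0}.

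For the inductive step, fix $i \in V - \overline{F_v}[r+1]$. Whether $i$ is fault-free (lines 14-15) or $i \in F_v[r+1]$ (via equation (\ref{e_faulty_H})), we have $h_i[r+1] = H(\R^c_i[r+1], r+1) = H_l(X; \tfrac{1}{|X|},\dots,\tfrac{1}{|X|})$, where $X = \{\, h \mid (h,k,r) \in \R^c_i[r+1]\,\}$. The heart of the argument is to show that for every $(h,k,r) \in \R^c_i[r+1]$: (a) $k \in V - \overline{F_v}[r]$, and (b) $h$ coincides with the canonical value $h_k[r]$ of the induction hypothesis (the algorithm value if $k$ is fault-free, or the value pinned down by (\ref{e_faulty_h}) if $k \in F_v[r]$). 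For fault-free $i$ this is immediate from Definition \ref{d_verified}, since $\R^c_i[r+1] \subseteq \R_i[r+1]$ and each tuple in the latter was added at line 12 only after the line-11 check. For $i \in F_v[r+1]$, there is a fault-free witness $j$ that reliably received $((h_i[r+1], \R^c_i[r+1]), i, r+2)$; the line-10 wait at $j$ forces $\R^c_i[r+1] \subseteq \R_j[r+1]$, so each $(h,k,r)$ was in fact verified by $j$, and Global Uniqueness identifies $h$ with $h_k[r]$. Lemma \ref{lemma:always_notFv_not_verified} supplies the complementary fact that no tuple of the form $(*, b, r)$ with $b \in \overline{F_v}[r]$ ever enters $\R^c_i[r+1]$.

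Given (a)-(b), the induction hypothesis gives $h_k[r] = \vectorv_k[r]$ with $h_k[r] \neq \emptyset$ for each contributing $k$, so $h_i[r+1]$ is a non-empty convex polytope (assertion (i)). Substituting $\vectorv_k[r]$ for $h_k[r]$ in the $H_l$ expression and appealing to the construction (\ref{eq:matrix_i})-(\ref{eq:matrix_i-2}) of $\bfM_i[r+1]$ together with identity (\ref{e_r_c}) gives $h_i[r+1] = H_l(\vectorv^T[r]; \bfM_i[r+1]) = \vectorv_i[r+1]$, which is assertion (ii). The arbitrary choice (I3) of $\vectorv_b[0]$ for $b \in \overline{F_v}[0]$ propagates harmlessly, since Lemma \ref{lemma:always_notFv_not_verified} ensures such indices receive zero weight in every row of every transition matrix indexed by a verified node. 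The main obstacle is precisely the faulty-$i$ bookkeeping: combining Global Uniqueness, the line-10 wait, and Lemma \ref{lemma:always_notFv_not_verified} to guarantee that the $\R^c_i[r+1]$ embedded in $i$'s broadcast is internally consistent with the canonical values $h_k[r]$. Once that is pinned down, the transition-matrix identity reduces to matching indices and weights.
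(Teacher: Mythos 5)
Your proof is correct and follows essentially the same route as the paper's: induction on the round, with the base case supplied by Lemma \ref{lemma:J_in_H0} and the initialization (I1)--(I3), and the inductive step reduced to showing that every tuple $(h,k,r)$ in $\R^c_i[r+1]$ has $k \in V-\overline{F_v}[r]$ and $h = h_k[r]$, handled separately for fault-free $i$ and for $i \in F_v[r+1]$ via the fault-free witness and the line-10 containment. The paper phrases the faulty-$i$ case as a short contradiction argument, but the substance is identical to yours.
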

\begin{proof}

The proof of the theorem is by induction. The theorem holds for $r=0$ due to
Lemma \ref{lemma:J_in_H0}, and the choice
of the elements of $\vectorv[0]$, as specified in (I1), (I2) and (I3) above.

Now, suppose that the theorem holds for $r=t-1$ where $t-1\geq 0$,
and prove it for $r=t$. 
Thus, by induction hypothesis, for
all $i\in V-\overline{F_v}[t-1]$, $h_i[t-1]=\vectorv_i[t-1]\neq \emptyset$.
Now, $\vectorv[t] = \bfM[t]\vectorv[t-1]$.
\begin{itemize}
\item In round $t\geq 1$, each fault-free node $i\in V-F$ computes its new state 
$h_i[t]$
at line 15 using function $H(\R^c_i[t],t)$. The function $H(\R^c_i[t],t)$ for $t\geq 1$ then computes a
linear combination of $|\R^c_i[t]|$ convex hulls, with all the weights being equal to
$\frac{1}{|\R^c_i[t]|}$.
Also,
by Definition \ref{d_verified} and the definition of $\overline{F_v}[t-1]$, if $(h,j,t-1)\in \R^c_i[t]$, then $j\not\in \overline{F_v}[t-1]$ (i.e., $j\in V-\overline{F_v}[t-1]$).
Therefore, if $(h,j,t-1)\in \R^c_i[t]$, then either $j$ is fault-free, or it is faulty and its
round $t-1$ execution is verified: thus, $h=h_j[t-1]$.

Also, by induction hypothesis, $h=h_j[t-1]\neq\emptyset$.
This implies that $h_i[t]=H(\R^c_i[t],t)$ is non-empty.

Then observe that, by defining $\bfM_{ik}[t]$ elements as in (\ref{eq:matrix_i})
and (\ref{eq:matrix_i-2}), we ensure that
$\bfM_i[t]\vectorv[t-1]$ equals $H(\R^c_i[t],t)$, and hence equals $h_i[t]$.

\item
For $i\in F_v[t]$ as well, as shown in (\ref{e_faulty_H}), $h_i[t]=H(\R^c_i[t],t)$, where
$h_i[t]$ and $\R^c_i[t]$ are as defined in (\ref{e_faulty_h})
and (\ref{e_faulty_V}).
The function $H(\R^c_i[t],t)$ for $t\geq 1$ then computes a
linear combination of $|\R^c_i[t]|$ convex hulls, with all the weights being equal to
$\frac{1}{|\R^c_i[t]|}$.
Consider an element $(h,j,t-1)$ in $\R^c_i[t]$. We argue that $j \in V - \overline{F_v}[t-1]$. Suppose this is not true, i.e., $j \in \overline{F_v}[t-1]$. 
By Definition \ref{d_verified}, node $i$'s round $t$ execution is verified by some fault-free node $k$, which implies that eventually, $\R^c_i[t] \subseteq \R_k[t]$. However, since $k$ is fault-free, and $(h,j,t-1) \not\in \R_k[t]$, a contradiction.
Hence, if $(h,j,t-1)\in \R^c_i[t]$, then $j\in V-\overline{F_v}[t-1]$.
That is, if $(h,j,t-1)\in \R^c_i[t]$, then either $j$ is fault-free, or it is faulty and its
round $t-1$ execution is verified: thus, $h=h_j[t-1]$.

Also, by induction hypothesis, $h=h_j[t-1]\neq\emptyset$.
This implies that $h_i[t]=H(\R^c_i[t],t)$ is non-empty.

Then observe that, by defining $\bfM_{ik}[t]$ elements as in (\ref{eq:matrix_i})
and (\ref{eq:matrix_i-2}), we ensure that
$\bfM_i[t]\vectorv[t-1]$ equals $H(\R^c_i[t],t)$, and hence equals $h_i[t]$. 

\end{itemize}

\comment{+++++++++ old incorrect proof+++++++++
Now, suppose that the theorem holds for $r=t-1$ where $t-1\geq 0$,
and prove it for $r=t$. 
Thus, by induction hypothesis, for
all $i\in V-\overline{F_v}[t-1]$, $h_i[t-1]=\vectorv_i[t-1]\neq \emptyset$.
Now, $\vectorv[t] = \bfM[t]\vectorv[t-1]$.
\begin{itemize}
\item In round $t\geq 1$, each fault-free node $i\in V-F$ computes its new state 
$h_i[t]$
at line 15 using function $H(\R^c_i[t],t)$. The function $H(\R^c_i[t],t)$ then computes a
linear combination of $|\R^c_i[t]|$ convex hulls, with all the weights being equal to
$|\R^c_i[t]|$.
Also, by Lemma \ref{lemma:always_notFv_not_verified},
if $(h,j,t-1)\in \R^c_i[t]$, then $j\not\in \overline{F_v}[t]$ (i.e., $j\in V-\overline{F_v}[t]$).
Therefore, if $(h,j,t-1)\in \R^c_i[t]$, then $h=h_j[t-1]$.
Also, by Lemma \ref{lemma:always_notFv_not_verified}, since $j\not\in \overline{F_v}[t]$,
we have $j\not\in\overline{F_v}[t-1]$.
Therefore, by induction hypothesis, $h=h_j[t-1]\neq\emptyset$.
This implies that $h_i[t]=H(\R^c_i[t],t)$ is non-empty.

Then observe that, by defining $\bfM_{ik}[t]$ elements as in (\ref{eq:matrix_i})
and (\ref{eq:matrix_i-2}), we ensure that
$\bfM_i[t]\vectorv[t-1]$ equals $H(\R^c_i[t],t)$, and hence equals $h_i[t]$. 


\item
For $i\in F_v[t]$ as well, as shown (\ref{e_faulty_H}),
$h_i[t]=H(\R^c_i[t])$, where
$h_i[t]$ and $\R^c_i[t]$ are as defined in (\ref{e_faulty_h})
and (\ref{e_faulty_V}).
The rest of the argument in this case is similar to that
above for $i\in V-F$, and $\bfM_i[t]\vectorv[t-1]=h_i[t]$ for $i\in F_v[t]$.
\end{itemize}
+++++++++++++}
\end{proof}

\comment{++++++++ old text++++++
From the discussion in the above proof,
and Lemma \ref{lemma:always_notFv_not_verified}, it follows that
the state $\vectorv_j[t]$ for each node $j\in\overline{F_v}[t]$ does not affect the state
of the nodes $V-\overline{F_v}[\tau]$, for $\tau\geq t+1$. Therefore,
the value of $\vectorv_j[t]$ for $j\in \overline{F_v}[t]$ is
of no consequence to the other nodes.
++++++++++}
Now, we argue that for $t \geq 0$, the state $\vectorv_j[t]$ for each node $j\in\overline{F_v}[t]$ does not affect the state
of the nodes $V-\overline{F_v}[\tau]$, for $\tau\geq t+1$.
From the discussion in the above proof, we see that for $j\in\overline{F_v}[t]$, $(*,j,t)\not\in \R^c_i[t+1]$ for $i \in V-\overline{F_v}[t+1]$. Thus, the sate $\vectorv_j[t]$ does not affect the state $h_i[t+1]$. Then, by Lemma \ref{lemma:always_notFv_not_verified}, if $j\in\overline{F_v}[t]$, then $j\in\overline{F_v}[\tau]$, for $\tau\geq t+1$. Thus, by the same argument, the sate $\vectorv_j[\tau]$ does not affect the state $h_i[t+1]$.
This justifies the somewhat arbitrary choice of
$\vectorv_j[0]$ for $j\in\overline{F_v}[0]$,
and $\matrixm_{jk}[t]$ in (\ref{e_fv}) for $j\in \overline{F_v}[t], ~t\geq 1$.
This choice does simplify the remaining proof somewhat.

The above discussion shows that, for $t\geq 1$, the evolution of $\vectorv[t]$ can be written as
in (\ref{matrix:alg1}), that is, $\vectorv[t]=\matrixm[t]\vectorv[t-1]$.
Given the matrix product definition in (\ref{eq:multiplication}),
it is easy to verify that
\[ \bfM[\tau+1] ~ \left(\bfM[\tau] \vectorv[\tau-1]\right) ~=~ \left(\bfM[\tau+1]\bfM[\tau]\right)~\vectorv[\tau-1]
\text{~ for~} \tau \geq 1.
\]
Therefore, by repeated application of (\ref{matrix:alg1}),
we obtain:
\begin{eqnarray}
\vectorv[t] & = & \left(\,\Pi_{\tau=1}^t \matrixm[\tau]\,\right)\, \vectorv[0],
 ~~~~ t\geq 1
\label{e_unroll}
\end{eqnarray}
Recall that we adopt the ``backward'' matrix product convention presented in
(\ref{backward}).

\begin{lemma}
\label{lemma:transition_matrix}
For $t\geq 1$,
transition matrix $\bfM[t]$ constructed using the above procedure satisfies the following conditions. 
\begin{itemize}
\item For $i,j \in V$, there exists a fault-free node $g(i,j)$ such that $\bfM_{ig(i,j)}[t] \geq \frac{1}{n}$. 

\item $\bfM[t]$ is a row stochastic matrix, and $\lambda(\bfM[t])\leq 1-\frac{1}{n}$.


\end{itemize}

\end{lemma}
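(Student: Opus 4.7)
The plan is to verify each assertion by tracing back to the construction rules for $\bfM[t]$ together with the constraints enforced by the algorithm and the lower bound $n\geq (d+2)f+1\geq 3f+1$.

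First I would verify row stochasticity. For $j\in\overline{F_v}[t]$ every entry of row $j$ equals $\frac{1}{n}$ by (\ref{e_fv}), and these sum to $1$. For $i\in V-\overline{F_v}[t]$ the non-zero columns are exactly those $k$ with $(h_k[t-1],k,t-1)\in\R^c_i[t]$ and each carries weight $\frac{1}{|\R^c_i[t]|}$, so there are $|\R^c_i[t]|$ non-zero entries, each equal to $\frac{1}{|\R^c_i[t]|}$, summing to $1$. Along the way I would note the algorithmic fact that in every round $t\geq 1$, the guard on line 13 forces $|\R^c_i[t]|\geq n-f$ (and similarly $|\R^c_i[0]|\geq n-f$ via line 4), and of course $|\R^c_i[t]|\leq n$. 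Hence every non-zero entry in $\bfM[t]$ is at least $\frac{1}{n}$.

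Next, I would establish the first bullet by a case split on whether $i,j$ lie in $\overline{F_v}[t]$. If either index is in $\overline{F_v}[t]$, that row is uniformly $\frac{1}{n}$, so I only need the \emph{other} row to have a non-zero entry in some fault-free column. For a row $i\in V-\overline{F_v}[t]$, the support of row $i$ corresponds to a set $S_i\subseteq V$ of size $\geq n-f$; since there are at most $f$ faulty nodes, $S_i$ contains at least $n-2f\geq 1$ fault-free nodes. If both $i,j\in V-\overline{F_v}[t]$, I would take the intersection of supports: $|S_i\cap S_j|\geq 2(n-f)-n=n-2f$, and subtracting the at most $f$ faulty members leaves $n-3f\geq 1$ fault-free nodes in $S_i\cap S_j$ (this is where $n\geq (d+2)f+1\geq 3f+1$ is used). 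Picking any such node gives $g(i,j)$ with both $\bfM_{ig(i,j)}[t],\bfM_{jg(i,j)}[t]\geq \frac{1}{n}$. If both $i,j\in\overline{F_v}[t]$, any fault-free column works since every entry is $\frac{1}{n}$.

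Finally, with $\gamma=\frac{1}{n}$, the common-column property above is precisely the hypothesis of Claim \ref{c_lambda_bound}, yielding $\lambda(\bfM[t])\leq 1-\frac{1}{n}$. The main (minor) obstacle I anticipate is the intersection-of-supports counting argument and making sure the constants line up using $n\geq 3f+1$; everything else is bookkeeping from the construction rules (\ref{eq:matrix_i}), (\ref{eq:matrix_i-2}), and (\ref{e_fv}) combined with the algorithmic guarantee $|\R^c_i[t]|\geq n-f$.
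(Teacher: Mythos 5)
Your proof is correct and follows essentially the same route as the paper: a case split on membership in $\overline{F_v}[t]$, a counting argument showing any two rows share a fault-free column with entries at least $\frac{1}{n}$ (your $n-3f\geq 1$ is exactly the paper's $n-2f\geq f+1$ in Claim \ref{claim:at_least_one_verified}), followed by Claim \ref{c_lambda_bound}. The one step you take for granted is that each of the $\geq n-f$ tuples in $\R^c_i[t]$ contributes a distinct non-zero column carrying the canonical value $h_k[t-1]$, so that the row supports really have size $\geq n-f$; the paper justifies this via Global Uniqueness (its Claims \ref{c_single} and \ref{c_common}) before performing the same intersection count.
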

The proof of Lemma \ref{lemma:transition_matrix} is presented in Appendix \ref{a_t_matrix}.

\subsection{Correctness of {\em Verified Averaging}}


\begin{definition}
\label{def:valid_hull}
A convex polytope $h$ is said to be {\em valid} if every point in $h$ is in the convex hull of the inputs at the fault-free nodes. 
\end{definition}

Lemmas \ref{lemma:valid_initial_hull}
and \ref{lemma:linear_valid} below are proved in  
Appendices \ref{app_s_lemma:valid_initial_hull} and \ref{app_s_lemma:linear_valid}, respectively.

\begin{lemma}
\label{lemma:valid_initial_hull}
$h_i[0]$ for each node $i \in \sv - \overline{F_v}[0]$ is valid.
\end{lemma}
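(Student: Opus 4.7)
The plan is to show that for every node $i \in V - \overline{F_v}[0]$, the polytope $h_i[0]$ is contained in $\HH(\{x_j \mid j \in V-F\})$, the convex hull of fault-free inputs. I will split into two cases depending on whether $i$ is fault-free or faulty-but-verified, and in both cases exploit the fact that $h_i[0] = H(\R^c_i[0],0)$ is an intersection of convex hulls over all size-$(|X|-f)$ subsets of the multiset $X := \{x \mid (x,j,-1) \in \R^c_i[0]\}$.

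First I would handle the fault-free case $i \in V-F$. By the guard on lines 4--5, $|\R^c_i[0]| \geq n-f$, and for every $(x,j,-1) \in \R^c_i[0]$ with $j$ fault-free, the \emph{Fault-Free Integrity} property of reliable broadcast forces $x = x_j$ (since $j$ only ever $\RBSend$s its true input on line 1). Because at most $f$ senders in total can be faulty, at most $f$ tuples in $\R^c_i[0]$ come from faulty senders, so I can remove them (plus possibly additional fault-free entries to reach size exactly $|X|-f$) to obtain a subset $C^* \subseteq X$ of size $|X|-f$ whose elements are all inputs of fault-free nodes. Then $\HH(C^*) \subseteq \HH(\{x_j \mid j \in V-F\})$, and since the definition of $H(\cdot,0)$ intersects $\HH(C)$ over all such $C$, we have $h_i[0] \subseteq \HH(C^*)$, establishing validity.

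Next I would handle the faulty-but-verified case $i \in F_v[0]$. By Definition~\ref{d_verified} and (\ref{e_faulty_H}), there is a fault-free node $k$ that reliably received $((h_i[0],\R^c_i[0]),i,1)$ and added $(h_i[0],i,0)$ to $\R_k[1]$; consequently node $k$ passed the line-11 check, so $|\R^c_i[0]| \geq n-f$ and $\R^c_i[0] \subseteq \R_k[0]$. The latter containment lets me reuse the previous argument verbatim: every tuple $(x,j,-1) \in \R^c_i[0]$ is also in $\R_k[0]$, so by Fault-Free Integrity (applied to $k$ as the receiver), the $x$-components for fault-free $j$ coincide with $x_j$. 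Again, at most $f$ of the tuples come from faulty senders, so a size-$(|X|-f)$ subset $C^*$ of purely fault-free inputs exists, and $h_i[0] \subseteq \HH(C^*) \subseteq \HH(\{x_j \mid j \in V-F\})$.

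The main subtlety, rather than an obstacle, is making sure the argument about ``at most $f$ faulty-sourced tuples'' lines up correctly with the multiset structure of $X$ and the intersection definition of $H(\cdot,0)$; specifically, that whenever the excess $f$ worth of elements we remove from $X$ happens to be fewer than $f$ faulty tuples, we are still allowed to drop additional fault-free tuples to reach the exact size $|X|-f$, without harming validity (since any subset of fault-free inputs still has convex hull inside $\HH(\{x_j \mid j\in V-F\})$). Once this accounting is made explicit, the rest is just an unpacking of the reliable broadcast guarantees and the definition of $H$.
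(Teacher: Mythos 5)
Your proof is correct and follows essentially the same approach as the paper: in both cases you use Fault-Free Integrity plus the fact that at most $f$ tuples in the relevant $\R$-set originate from faulty senders to exhibit a size-$(|X|-f)$ subset $C^*$ consisting solely of fault-free inputs, so that $h_i[0]\subseteq \HH(C^*)$. The only (harmless) divergence is in the $i\in F_v[0]$ case, where the paper passes through the containment $H(\R^c_i[0],0)\subseteq H(\R_j[0],0)$ while you apply the first-case argument directly to $\R^c_i[0]$ via $\R^c_i[0]\subseteq\R_k[0]$ — a slightly more direct route that avoids having to justify the monotonicity of $H(\cdot,0)$ under set inclusion.
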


\begin{lemma}
\label{lemma:linear_valid}
Suppose non-empty convex polytopes $h_1, h_2, \cdots, h_k$ are all valid. Consider $k$ constants $c_1, c_2, \cdots, c_k$ such that $0 \leq c_i \leq 1$ and $\sum_{i = 1}^k c_i = 1$.
Then the linear combination of these convex polytopes,
$H_l(h_1, h_2, \cdots, h_k; c_1, c_2, \cdots, c_k)$, is valid.
\end{lemma}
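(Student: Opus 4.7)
The plan is to show directly from the definition of $H_l$ that any point in the linear combination lies in the convex hull of the fault-free inputs. Let $\mathcal{C}$ denote the convex hull of the input vectors $\{x_j : j \in V - F\}$ at the fault-free nodes; by definition, $\mathcal{C}$ is a convex set. The hypothesis that each $h_i$ is valid means exactly that $h_i \subseteq \mathcal{C}$ for $i = 1, \dots, k$.

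Next, I would take an arbitrary point $p \in H_l(h_1,\dots,h_k;c_1,\dots,c_k)$ and unpack Definition \ref{def:linear_hull}. With $Q = \{i : c_i \neq 0,~ 1 \le i \le k\}$, there must exist points $p_i \in h_i$ for each $i \in Q$ such that $p = \sum_{i \in Q} c_i p_i$. By validity of the $h_i$'s, each such $p_i$ belongs to $\mathcal{C}$. Moreover, the coefficients $\{c_i\}_{i \in Q}$ satisfy $c_i > 0$ and $\sum_{i \in Q} c_i = \sum_{i=1}^{k} c_i = 1$, since the omitted indices contribute zero. Hence $p$ is a convex combination of points in $\mathcal{C}$.

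Finally, I would invoke convexity of $\mathcal{C}$: any convex combination of finitely many points of a convex set is again in that set (a standard induction on the number of terms starting from the two-point definition of convexity). This gives $p \in \mathcal{C}$, so every point of $H_l(h_1,\dots,h_k;c_1,\dots,c_k)$ lies in $\mathcal{C}$, which is the definition of validity.

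There is essentially no obstacle here; the result is an immediate consequence of (i) the way $H_l$ expresses its points as convex combinations of points drawn from the constituent polytopes, and (ii) the convexity of the convex hull of the fault-free inputs. The only minor bookkeeping point is handling the indices with $c_i = 0$, which is why Definition \ref{def:linear_hull} restricts to the set $Q$; this is harmless since dropping zero-weight terms does not change the sum or the convex-combination property.
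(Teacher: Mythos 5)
Your proof is correct and follows essentially the same route as the paper's: unpack a point of $H_l$ as a convex combination of points drawn from the valid polytopes, and conclude by convexity of the hull of the fault-free inputs. Your explicit handling of the zero-weight indices via $Q$ is a slightly more careful version of the same argument.
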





\begin{theorem}
\label{thm:correctness}
Verified Averaging satisfies the {\em validity} and
{\em approximate agreement} properties
after a large enough number of asynchronous rounds.
\end{theorem}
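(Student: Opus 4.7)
The plan is to prove \emph{validity} and \emph{approximate agreement} separately, reusing the matrix machinery already in place. For validity, I would argue by induction on $t \geq 0$ that $h_i[t]$ is valid for every $i \in V - \overline{F_v}[t]$. The base case $t = 0$ is exactly Lemma \ref{lemma:valid_initial_hull}. For the inductive step, the proof of Theorem \ref{t_M} already shows that, for each $i \in V - \overline{F_v}[t]$, every tuple $(h, j, t-1) \in \R^c_i[t]$ satisfies $j \in V - \overline{F_v}[t-1]$ and $h = h_j[t-1]$. Hence $h_i[t] = H(\R^c_i[t], t)$ is a linear combination with uniform weights $1/|\R^c_i[t]|$ of polytopes that are valid by the induction hypothesis, so Lemma \ref{lemma:linear_valid} closes the step. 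Lemma \ref{l_progress} guarantees that every fault-free node reaches every round, so validity of the eventual output at each fault-free node follows.

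For approximate agreement, I would work with the unrolled transition matrix. Set $\bfQ[t] := \Pi_{\tau=1}^t \bfM[\tau]$, so that by (\ref{e_unroll}) we have $\vectorv[t] = \bfQ[t]\,\vectorv[0]$. Lemma \ref{lemma:transition_matrix} gives $\lambda(\bfM[\tau]) \leq 1 - \tfrac{1}{n}$ for every $\tau$, and Claim \ref{claim_zelta} therefore yields $\delta(\bfQ[t]) \leq (1 - \tfrac{1}{n})^t$, which decays geometrically to $0$. Thus for any two fault-free nodes $i, j$, the rows $\bfQ_i[t]$ and $\bfQ_j[t]$ become arbitrarily close coordinate-wise as $t$ grows.

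The main obstacle, and the step needing the most care, is translating this scalar row-closeness into Hausdorff closeness of the polytopes $\vectorv_i[t] = H_l(\vectorv[0]^T;\,\bfQ_i[t])$ and $\vectorv_j[t] = H_l(\vectorv[0]^T;\,\bfQ_j[t])$. My plan is: given any $p \in \vectorv_i[t]$, write $p = \sum_k \bfQ_{ik}[t]\, p_k$ with $p_k \in \vectorv_k[0]$, and consider the candidate $q = \sum_k \bfQ_{jk}[t]\, p_k \in \vectorv_j[t]$, which gives $\|p - q\| \leq \sum_k |\bfQ_{ik}[t] - \bfQ_{jk}[t]|\cdot \|p_k\|$. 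Because the discussion following Theorem \ref{t_M}, together with Lemma \ref{lemma:always_notFv_not_verified}, implies that $\bfQ_{ik}[t] = 0$ whenever $k \in \overline{F_v}[0]$ and $i \in V - \overline{F_v}[t]$, only columns indexed by $V - \overline{F_v}[0]$ contribute; for those, $\vectorv_k[0]$ is valid and therefore lies in the convex hull of the fault-free inputs, which has some bounded diameter $D$. Exploiting $\sum_k (\bfQ_{ik}[t] - \bfQ_{jk}[t]) = 0$ to subtract a common reference point, and then applying the definition of $\delta$, this bound reduces to at most $n\, D\, \delta(\bfQ[t])$, which decays exponentially in $t$. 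A symmetric argument handles the reverse direction of the Hausdorff distance.

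To conclude, given $\epsilon > 0$, I would pick $t$ large enough so that $n D (1 - \tfrac{1}{n})^t < \epsilon$. Theorem \ref{t_M} identifies $h_i[t]$ with $\vectorv_i[t]$ for every fault-free $i$, so validity (from the induction) and approximate agreement (from the ergodicity bound plus the Hausdorff translation) hold simultaneously at round $t$, as required.
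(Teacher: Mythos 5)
Your proposal is correct and follows essentially the same approach as the paper: the same use of Lemma~\ref{l_progress} for progress, the same ergodicity-coefficient bound $\delta(\Pi_\tau \bfM[\tau]) \leq (1-\frac{1}{n})^t$ via Lemma~\ref{lemma:transition_matrix} and Claim~\ref{claim_zelta}, and the same matched-point construction ($p = \sum_k \bfQ_{ik}[t]p_k$ paired with $q = \sum_k \bfQ_{jk}[t]p_k$, with columns in $\overline{F_v}[0]$ vanishing) to convert row-closeness into a Hausdorff bound. The only cosmetic difference is that you establish validity by round-by-round induction using Lemma~\ref{lemma:linear_valid} at each step, whereas the paper applies Lemma~\ref{lemma:linear_valid} once to the rows of the product matrix acting on $\vectorv[0]$; the two are equivalent.
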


\begin{proof}
Repeated applications of
Lemma \ref{l_progress} ensures that the fault-free nodes will progress
from round 0 through round $r$, for any $r\geq 0$, allowing us to use (\ref{e_unroll}). Consider round $t \geq 1$.
Let
\begin{eqnarray}\bfM^* &=& \Pi_{\tau=1}^t \bfM[\tau].
\label{e_Mstar}
\end{eqnarray}
(To simplify the presentation, we do not include the round index $[t]$
in the notation $\bfM^*$ above.)
 Then $\vectorv[t]=\bfM^* \vectorv[0]$.
By Lemma \ref{lemma:transition_matrix},
each $\bfM[t]$ is a {\em row stochastic} matrix,
therefore, $\bfM^*$ is also row stochastic. 
By Lemma \ref{lemma:valid_initial_hull},
$h_i[0]=\vectorv_i[0]$ for each $i\in \sv - \overline{F_v}[0]$ is valid. Therefore,
by Lemma \ref{lemma:linear_valid}, $\bfM_i^*\vectorv[0]$ for each $i\in V-F$
is valid. Also, by Theorem \ref{t_M} and (\ref{e_unroll}), $h_i[t]=\bfM_i^*\vectorv[0]$
for $i\in V-F$.
Thus, $h_i[t]$ is valid, and {\em Verified Averaging} satisfies the
validity condition for all $t\geq 0$.

Let us define $\alpha = 1-\frac{1}{n}$.
By Lemma \ref{lemma:transition_matrix}, $\lambda(\bfM[t]) \leq 1-\frac{1}{n}=\alpha$. Then by Claim \ref{claim_zelta},
\begin{eqnarray}
\label{e_alpha}
\delta(\bfM^*)=
\delta(\Pi_{\tau=1}^t \bfM[\tau])
~ \leq ~ \lim_{t\rightarrow\infty} \Pi_{\tau=1}^{t} \lambda(\bfM[\tau]) 
 ~\leq~ {\left(1-\frac{1}{n}\right)}^t ~=~\alpha^t
\end{eqnarray}
%
%

Consider any two fault-free nodes $i,j\in V-F$.
By (\ref{e_alpha}), $\delta(\bfM^*)\leq \alpha^t$. 
Therefore, by the definition of $\delta(\cdot)$, for $1 \leq k \leq n$,
\begin{equation}
\label{eq:delta1}
\| \bfM^*_{ik} -  \bfM^*_{jk}\| \leq \alpha^t
\end{equation}
%
%
By Lemma \ref{lemma:always_notFv_not_verified}, and construction of the transition matrices,
it should be easy to see that $\bfM^*_{ib}=0$ for $b\in \overline{F_v}[0]$.
Then, for
any point $p_i^*$ in $h_i[t]=\bfM^*_i\vectorv[0]$,
there must exist, for all $k\in V-\overline{F_v}[0]$, $p_k \in h_k[0],$ such that
\begin{equation}
\label{pi}
p_i^* = \sum_{k\in V-\overline{F_v}[0]} \bfM^*_{ik} p_k
= \left(\sum_{k\in V-\overline{F_v}[0]} \bfM^*_{ik} p_k(1),~~\sum_{k\in V-\overline{F_v}[0]} \bfM^*_{ik} p_k(2), \cdots, \sum_{k\in V-\overline{F_v}[0]} \bfM^*_{ik} p_k(d)\right)
\end{equation}
where $p_k(l)$ denotes the value of $p_k$'s $l$-th coordinate.
Now choose point $p_j^*$ in $h_j[t]$ defined as follows.
\begin{equation}
\label{pj}
p_j^* = \sum_{k\in V-\overline{F_v}[0]} \bfM^*_{jk} p_k 
= \left(\sum_{k\in V-\overline{F_v}[0]} \bfM^*_{jk} p_k(1),~~ \sum_{k\in V-\overline{F_v}[0]} \bfM^*_{jk} p_k(2), \cdots,  \sum_{k\in V-\overline{F_v}[0]} \bfM^*_{jk} p_k(d)\right)
\end{equation}

Then the Euclidean distance between $p_i^*$ and $p_j^*$ is $d(p_i^*,p_j^*)$.
The following derivation is obtained by simple algebraic manipulation, using (\ref{eq:delta1}),
(\ref{pi}) and (\ref{pj}).
The omitted steps in the algebraic manipulation are shown in Appendix \ref{a_hausdorff}.
\begin{align}
d(p_i^*, p_j^*) &=  \sqrt{\sum_{l=1}^d (p_i^*(l) - p_j^*(l))^2} = \sqrt{\sum_{l=1}^d \left(\sum_{k\in V-\overline{F_v}[0]} \bfM^*_{ik} p_k(l) - \sum_{k\in V-\overline{F_v}[0]} \bfM^*_{jk} p_k(l)\right)^2}\nonumber\\
&\leq \alpha^t \sqrt{\sum_{l=1}^d \left(\sum_{k\in V-\overline{F_v}[0]} \| p_k(l) \|\right)^2}\label{eq:d_pipj1} \leq ~~ \alpha^t\Omega
\end{align} 
where
$\Omega = \max_{p_k \in h_k[0],k\in V-\overline{F_v}[0]} \sqrt{\sum_{l=1}^d (\sum_{k\in V-\overline{F_v}[0]} \| p_k(l) \|)^2}$.
Because the $h_k[0]$'s in the definition of $\Omega$ are all valid (by
Lemma \ref{lemma:valid_initial_hull}),
$\Omega$ can itself be upper bounded by a function of the input vectors at the fault-free
nodes.
Since $\alpha=1-\frac{1}{n}<1$, for large enough $t$, $\alpha^t\Omega<\epsilon$, for any given
$\epsilon$. Therefore, for fault-free $i,j$, for large enough $t$, for each point $p_i^*\in h_i[t]$
there exists a point $p_j^*[t]\in h_j[t]$ such that $d(p^*_i,p^*_j)<\epsilon$ (and, similarly, vice-versa).
Thus, by Definition \ref{def:dist}, eventually
Hausdorff distance $\D(h_i[t],h_j[t]) <\epsilon$.
Since this holds true for any pair of fault-free nodes
$i,j$, approximate agreement property is eventually satisfied.
\end{proof}

\section{Convex Polytope Obtained by {\em Verified Averaging}}
\label{s_size}

Recall that $|F|=\phi\leq f$.
Let $G = \cup_{i \in \{\sv - F\}} x_i$ be the set of the inputs at all fault-free nodes.
Thus, $|G|=n-\phi\geq n-f$.
Define a convex polytope $I$ as follows.
\begin{equation}
\label{eq:I}
I = \cap_{D \subset G, |D| = n - 2f-\phi}\, \HH(D)
\end{equation}

%

\begin{lemma}
\label{lemma:opt}
For all $i\in V-F$ and $t\geq 0$,
 $I\subseteq h_i[t]$.
\end{lemma}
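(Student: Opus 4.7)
The plan is to prove the lemma by induction on $t$, but with a strengthened hypothesis: for every $t \geq 0$, $I \subseteq h_i[t]$ for all $i \in V - \overline{F_v}[t]$, not only for fault-free $i$. The strengthening is necessary because a fault-free node's round-$t$ update uses polytopes $h_j[t-1]$ from nodes $j \in V - \overline{F_v}[t-1]$, which may include verified faulty nodes $j \in F_v[t-1]$; an induction restricted to $V - F$ would not close.

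For the base case $t = 0$, I fix $i \in V - \overline{F_v}[0]$ and let $X$ be the multiset of first coordinates of $\R^c_i[0]$, so $|X| \geq n - f$. Because the reliable broadcast primitive binds every reliably received message to a unique sender (Global Uniqueness and Fault-Free Integrity), at most $\phi$ entries of $X$ can come from faulty nodes; the remaining at least $|X| - \phi$ entries are genuine inputs of fault-free nodes and hence lie in the multiset $G$. For any $C \subseteq X$ with $|C| = |X| - f$, at least $|X| - \phi - f \geq n - 2f - \phi$ entries of $C$ lie in $G$; since $n \geq (d+2)f + 1$ and $\phi \leq f$ imply $n - 2f - \phi \geq (d-1)f + 1 \geq 1$, I can select a sub-multiset $D$ of those entries of size exactly $n - 2f - \phi$. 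By the definition of $I$ in (\ref{eq:I}), $I \subseteq \HH(D) \subseteq \HH(C)$, and intersecting over all valid $C$ yields $I \subseteq h_i[0]$. The same counting applies for $i \in F_v[0]$, because the verification check at line 11 forces $\R^c_i[0]$ to have size at least $n-f$ and to be a subset of some fault-free verifier's $\R$-set, whose fault-free entries are genuine inputs.

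For the inductive step, I assume the strengthened claim at $t - 1$ and fix $i \in V - \overline{F_v}[t]$. With $X := \{\,h : (h,j,t-1) \in \R^c_i[t]\,\}$ as a multiset, $h_i[t] = H_l(X;\,\tfrac{1}{|X|},\ldots,\tfrac{1}{|X|})$. Applying Definition \ref{d_verified} and Lemma \ref{lemma:always_notFv_not_verified} exactly as in the proof of Theorem \ref{t_M}, every $(h,j,t-1) \in \R^c_i[t]$ satisfies $j \in V - \overline{F_v}[t-1]$ and $h = h_j[t-1]$. The inductive hypothesis then gives $I \subseteq h$ for each $h \in X$. So for any $p \in I$, I set $p_h := p$ for every $h \in X$ and obtain $p = \sum_{h \in X} \tfrac{1}{|X|}\,p_h \in h_i[t]$ by Definition \ref{def:linear_hull}. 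Thus $I \subseteq h_i[t]$, and since $V - F \subseteq V - \overline{F_v}[t]$, the lemma follows.

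The only real subtlety, and the main obstacle, is spotting the need to strengthen the induction to $V - \overline{F_v}[t]$: verified faulty nodes' polytopes appear inside $\R^c_i[t]$ for fault-free $i$, and without a bound on those polytopes the equal-weight linear combination cannot be controlled. Once the strengthening is in place, the base case reduces to a pigeonhole on $|C| = |X| - f$ exploiting the intersection form of $H(\cdot,0)$, and the inductive step is the one-line observation that an equal-weight convex combination of polytopes all containing a common point still contains that point.
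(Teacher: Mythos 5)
Your proof is correct, and it reaches the conclusion by a genuinely different route from the paper for everything past round $0$. The base case is essentially identical to the paper's: the same pigeonhole count showing every $C$ with $|C|=|X|-f$ retains at least $n-2f-\phi$ fault-free inputs, hence $\HH(C)\supseteq I$, handled uniformly for $i\in V-F$ and $i\in F_v[0]$. For $t\geq 1$, however, the paper does not induct on $h_i[t]$ directly; it passes through the transition-matrix machinery: it writes $h_i[t]=\bfM^*_i\vectorv[0]$ with $\bfM^*=\Pi_{\tau=1}^t\bfM[\tau]$, proves separately (Claim \ref{claim:M*}, itself an induction on $t$) that $\bfM^*_{ik}=0$ for every $k\in\overline{F_v}[0]$, and then argues that a stochastic $H_l$-combination of the round-$0$ polytopes, all of which contain $I$ and all of which receive nonzero weight, still contains $I$. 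Your per-round induction with the strengthened hypothesis $I\subseteq h_j[\tau]$ for all $j\in V-\overline{F_v}[\tau]$ accomplishes the same thing more directly: the facts you invoke (that every $(h,j,t-1)\in\R^c_i[t]$ has $j\in V-\overline{F_v}[t-1]$ and $h=h_j[t-1]$) are exactly those established inside the proof of Theorem \ref{t_M}, and the one-line observation that an equal-weight combination of polytopes sharing a common point contains that point replaces the paper's appeal to row-stochasticity of $\bfM^*$. What your version buys is independence from Claim \ref{claim:M*} and from the unrolled product (\ref{e_Mstar}); what the paper's version buys is reuse of machinery it has already built for the agreement proof, so that the containment argument is stated once, at the level of $\bfM^*$, rather than re-proved round by round. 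Your identification of the need to strengthen the induction to cover $F_v[t-1]$ is precisely the point where a naive induction over $V-F$ alone would fail, and you handle it correctly.
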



The lemma is proved in Appendix \ref{a_opt}.
The lemma establishes a ``lower bound'' on the convex polytope that the fault-free
nodes decide on.
Due to Theorem \ref{t_M}, $h_i[t]$ is always non-empty for $i\in V-F$.
However, $I$ may possibly be empty, depending on the inputs at the fault-free
nodes, and the total number of nodes.
We believe that it may be possible to improve the above ``lower bound'' by using
a somewhat more complex algorithm (using the {\em stable-vectors} primitive
from \cite{herlihy_colorless_async} in round 0).
This improvement is left as a topic for future work. In the follow-up work \cite{Tseng_BCC_optimal}, we present an optimal algorithm that agrees on a convex polytope that is as {\em large} as possible under adversarial conditions

\section{Summary}

This paper addresses {\em Byzantine Convex Consensus} (BCC), wheresin each
node has a $d$-dimensional vector as its {\em input}, and each fault-free node has decide on a
polytope that is in the {\em convex hull} of the input vectors at the fault-free nodes. 
We present an {\em asynchronous approximate} BCC algorithm with optimal fault tolerance, 
and establish a lower bound on the convex polytope agreed upon.


\begin{thebibliography}{10}

\bibitem{abraham_04_3t+1_async}
I.~Abraham, Y.~Amit, and D.~Dolev.
\newblock Optimal resilience asynchronous approximate agreement.
\newblock In {\em OPODIS}, pages 229--239, 2004.

\bibitem{Welch_textbook}
H.~Attiya and J.~Welch.
\newblock {\em Distributed Computing: Fundamentals, Simulations, and Advanced
  Topics}.
\newblock Wiley Series on Parallel and Distributed Computing, 2004.

\bibitem{AA_Dolev_1986}
D.~Dolev, N.~A. Lynch, S.~S. Pinter, E.~W. Stark, and W.~E. Weihl.
\newblock Reaching approximate agreement in the presence of faults.
\newblock {\em J. ACM}, 33:499--516, May 1986.

\bibitem{FLP_one_crash}
M.~J. Fischer, N.~A. Lynch, and M.~S. Paterson.
\newblock Impossibility of distributed consensus with one faulty process.
\newblock {\em J. ACM}, 32:374--382, April 1985.

\bibitem{Hajnal58}
J.~Hajnal.
\newblock Weak ergodicity in non-homogeneous markov chains.
\newblock In {\em Proceedings of the Cambridge Philosophical Society},
  volume~54, pages 233--246, 1958.

\bibitem{Hausdorff}
D.~Huttenlocher, G.~Klanderman, and W.~Rucklidge.
\newblock Comparing images using the Hausdorff distance.
\newblock {\em IEEE Transactions on Pattern Analysis and Machine Intelligence},
 15(9):850--863, 1993.

\bibitem{AA_nancy}
N.~A. Lynch.
\newblock {\em Distributed Algorithms}.
\newblock Morgan Kaufmann, 1996.

\bibitem{herlihy_multi-dimension_AA}
H.~Mendes and M.~Herlihy.
\newblock Multidimensional approximate agreement in byzantine asynchronous
  systems.
\newblock In {\em ACM Symposium on Theory of Computing (STOC)}, 2013.

\bibitem{herlihy_colorless_async}
H.~Mendes, C.~Tasson, and M.~Herlihy.
\newblock The topology of asynchronous byzantine colorless tasks.
\newblock {\em CoRR}, abs/1302.6224, 2013.

\bibitem{tverberg}
M.~A. Perles and M.~Sigorn.
\newblock A generalization of Tverberg's theorem.
\newblock {\em CoRR}, abs/0710.4668, 2007.

\bibitem{Tseng_BCC_optimal}
L.~Tseng and N.~H. Vaidya.
\newblock Byzantine Convex Consensus: An Optimal Algorithm.
\newblock Tech. Report, CSL UIUC, 2013.
\newblock http://www.crhc.illinois.edu/wireless/papers/BCC\_optimal\_tech.pdf.

\bibitem{Vaidya_BVC}
N.~H. Vaidya and V.~K. Garg.
\newblock Byzantine vector consensus in complete graphs.
\newblock {\em CoRR}, abs/1302.2543, 2013. To appear at ACM PODC 2013.

\bibitem{Wolfowitz}
J.~Wolfowitz.
\newblock Products of indecomposable, aperiodic, stochastic matrices.
\newblock In {\em Proceedings of the American Mathematical Society}, volume~14,
  pages 733--737, 1963.

\end{thebibliography}

\clearpage
\appendix

\section{Notations}
\label{app_s_notations}

This appendix summarizes some of the notations and terminology introduced throughout the paper.

\begin{itemize}
\item $n =$ number of nodes. We assume that $n\geq 2$.
\item $f =$ maximum number of Byzantine nodes.
\item $\sv = \{1, 2, \cdots, n\}$ is the set of all nodes.
\item $d = $ dimension of the input vector at each node.
\item $d(p, q) = $ the function returns the Euclidean distance between points $p$ and $q$.
\item $d_H(h_1, h_2) = $ the Hausdorff distance between convex polytopes $h_1, h_2$.
\item $\HH(C) = $ the convex hull of a multiset $C$.
\item $H_l(h_1, h_2, \cdots, h_k;~c_1, c_2, \cdots, c_k)$, defined in Section \ref{s_ops},
is a linear combination of convex polytopes $h_1, h_2, ..., h_k$ with weights $c_1,c_2,\cdots,c_k$.
\item $H(\VV,t)$ is a function defined in Section \ref{s_ops}.
\item $| X | = $ the size of a {\em multiset} or {\em set} $X$.
\item $\| a \| = $ the absolute value of a real number $a$.
\item $F$ denotes the {\em actual} set of faulty nodes in an execution of the algorithm.
\item $\phi=|F|$. Thus, $0\leq \phi\leq f$.
\item $F_v[t]$, $t\geq 0$, denotes the set of faulty nodes whose round $t$ execution is verified by at least one fault-free node, as per Definition \ref{d_verified}.
\item $\overline{F_v}[t]=F-F_v[t]$, $t\geq 0$.
\item $\alpha=1-\frac{1}{n}$.
\item 
We use boldface upper case letters to denote matrices, rows of matrices, and their elements. For instance, $\bfA$ denotes a matrix, $\bfA_i$ denotes the $i$-th row of matrix $\bfA$, and $\bfA_{ij}$ denotes the element at the intersection of the $i$-th row and the $j$-th column of matrix $\bfA$.

\end{itemize}

\section{$H_l(h_1, h_2,\cdots, h_\nu;~c_1, c_2,\cdots, c_\nu)$ is Convex}
\label{a:linear_convex}

\begin{claim}
$H_l(h_1, h_2,\cdots, h_\nu;~c_1, c_2,\cdots, c_\nu)$ defined in Definition \ref{def:linear_hull}
is convex.
\end{claim}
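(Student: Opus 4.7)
The plan is to verify convexity directly from the definition in Definition \ref{def:linear_hull}. Take two arbitrary points $p, q \in H_l(h_1,\dots,h_\nu;~c_1,\dots,c_\nu)$ and a scalar $\lambda \in [0,1]$; I need to exhibit $\lambda p + (1-\lambda) q$ as a point satisfying the condition in (\ref{eq:linear_hull}).

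First, I would unpack the membership of $p$ and $q$: by Definition \ref{def:linear_hull}, there exist witnesses $p_i \in h_i$ and $q_i \in h_i$ for each $i \in Q$ such that $p = \sum_{i \in Q} c_i p_i$ and $q = \sum_{i \in Q} c_i q_i$. Next, for each $i \in Q$, define the candidate witness $r_i := \lambda p_i + (1-\lambda) q_i$. Because each $h_i$ is itself a convex polytope (hence convex as a set), and $p_i, q_i \in h_i$ with $\lambda \in [0,1]$, we have $r_i \in h_i$.

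Finally I would distribute $\lambda$ and $(1-\lambda)$ across the sums and re-collect terms to obtain
\[
\lambda p + (1-\lambda) q ~=~ \sum_{i \in Q} c_i \bigl(\lambda p_i + (1-\lambda) q_i\bigr) ~=~ \sum_{i \in Q} c_i\, r_i,
\]
which, together with $r_i \in h_i$ for each $i \in Q$, shows that $\lambda p + (1-\lambda) q$ meets the condition in (\ref{eq:linear_hull}) and therefore lies in $H_l$.

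There is no real obstacle here: the argument is a one-line unwinding of Definition \ref{def:linear_hull} combined with the (given) convexity of each $h_i$. The only thing worth being explicit about is that the index set $Q$ and the weights $c_i$ used to witness membership of $p$, $q$, and $\lambda p + (1-\lambda) q$ are the same, so no issues arise with the normalization $\sum_{i\in Q} c_i = 1$; we simply reuse the $c_i$'s.
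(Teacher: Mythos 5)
Your proof is correct and follows essentially the same route as the paper's: both unpack membership of the two points via the witnesses $p_i, q_i \in h_i$, form the convex combination $\lambda p_i + (1-\lambda) q_i$ inside each $h_i$, and re-collect the sum with the same weights $c_i$. No gaps.
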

\begin{proof}

The proof is straightforward.

Let
\[ h_L := H_l(h_1, h_2,\cdots, h_\nu;~c_1, c_2,\cdots, c_\nu)\]
and
\[Q := \{ i ~|~ c_i\neq 0, ~ 1\leq i\leq \nu \}.\]
Given any two points $x, y$ in $h_L$, by Definition \ref{def:linear_hull}, we have
\begin{equation}
\label{eq:x2}
x = \sum_{i \in Q} c_i p_{(i, x)}~~\text{for some}~p_{(i,x)} \in h_i,~~i \in Q
\end{equation}

and 

\begin{equation}
\label{eq:y2}
y = \sum_{i \in Q} c_i p_{(i,y)}~~\text{for some}~p_{(i,y)} \in h_i,~~i\in Q
\end{equation}

Now, we show that any convex combination of $x$ and $y$ is also in $h_L$. Consider a point $z$ such that

\begin{equation}
\label{eq:z2}
z = \theta x + (1 - \theta) y~~~\text{where}~0 \leq \theta \leq 1
\end{equation}

Substituting (\ref{eq:x2}) and (\ref{eq:y2}) into (\ref{eq:z2}), we have

\begin{align}
z &= \theta \sum_{i\in Q}~c_i p_{(i,x)} + (1-\theta) \sum_{i \in Q}~c_i p_{(i,y)}\nonumber\\
&= \sum_{i\in Q}~c_i\left(\theta p_{(i,x)} + (1-\theta) p_{(i,y)}\right) \label{eq:p_iz2}
\end{align}

Define $p_{(i,z)} = \theta p_{(i,x)} + (1-\theta) p_{(i,y)}$ for all $i\in Q$. Since $h_i$ is convex, and $p_{(i,z)}$ is a convex combination of $p_{(i,x)}$ and $p_{(i,y)}$, $p_{(i,z)}$ is also in $h_i$. Substituting the definition of $p_{(i,z)}$ in (\ref{eq:p_iz2}), we have

\begin{align*}
z &=  \sum_{i \in Q}~~c_i~p_{(i,z)}~~\text{where}~p_{(i,z)} \in h_i,~~i\in Q
\end{align*}

Hence, by Definition \ref{def:linear_hull}, $z$ is also in $h_L$. Therefore, $h_L$ is convex.

\end{proof}

\section{Claim \ref{c_common_verify}}
\label{a_claim_common_verify}

\begin{claim}
\label{c_common_verify}
Consider fault-free nodes $i, j\in V-F$.
For $t\geq 0$, if $(h,k,t-1)\in \R_i[t]$ at some point of time, then eventually $(h,k,t-1)\in \R_j[t]$.
\end{claim}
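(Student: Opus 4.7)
The plan is to prove Claim~\ref{c_common_verify} by induction on $t$, relying on the \emph{Fault-Free Liveness} and \emph{Global Liveness} properties of reliable broadcast together with the atomicity of the handlers that populate $\R_i[t]$.

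For the base case $t=0$, suppose $(x,k,-1)\in\R_i[0]$. By inspection of line~3, the only way a tuple of the form $(x,k,-1)$ gets into $\R_i[0]$ is through the handler triggered by $\RBRecv(x,k,0)$ at node $i$. If $k$ is fault-free, \emph{Fault-Free Liveness} ensures that $j$ eventually reliably receives $(x,k,0)$ as well; if $k$ is faulty, \emph{Global Liveness} supplies the same conclusion. Either way, the handler at node $j$ fires and inserts $(x,k,-1)$ into $\R_j[0]$.

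For the inductive step, assume the claim for $t-1$ and suppose $(h,k,t-1)\in\R_i[t]$. The only mechanism that adds this tuple is line~12, so at some earlier point node $i$ must have reliably received a message $((h,\VV),k,t)$ and, inside the atomic block of lines~11--12, found that $\VV\subseteq\R_i[t-1]$, $|\VV|\geq n-f$, $(*,k,t-2)\in\VV$, and $h=H(\VV,t-1)$. Again by \emph{Fault-Free Liveness} (if $k\in V-F$) or \emph{Global Liveness} (if $k\in F$), node $j$ will eventually reliably receive the same message $((h,\VV),k,t)$; \emph{Global Uniqueness} guarantees that the tuple received by $j$ is literally $((h,\VV),k,t)$, not some other payload. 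The handler at $j$ then waits at line~10 until $\VV\subseteq\R_j[t-1]$. Because $\VV\subseteq\R_i[t-1]$ and $\VV$ is finite, the induction hypothesis applied to each element of $\VV$ shows that every tuple in $\VV$ eventually appears in $\R_j[t-1]$, so the line~10 wait eventually completes. The three checks on line~11 depend only on $(h,\VV,k,t)$ and so evaluate to \TRUE{} at $j$ just as they did at $i$; hence line~12 fires at $j$ and $(h,k,t-1)$ is added to $\R_j[t]$.

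I do not anticipate a serious obstacle: the claim is essentially a liveness statement that follows mechanically from the reliable broadcast properties once one observes that the predicates on line~11 are purely a function of the broadcast message. The one subtlety worth spelling out is why the line~10 wait terminates at $j$, which is exactly where the induction hypothesis is invoked; careful bookkeeping there (noting that $\R_i[t-1]$ may have grown after $i$ entered round $t$, and likewise for $\R_j[t-1]$) is the only place the argument has any content beyond a direct appeal to the broadcast axioms.
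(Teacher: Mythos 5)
Your proof is correct and follows essentially the same route as the paper's: induction on $t$, with Global Liveness delivering the same message to $j$, the induction hypothesis discharging the line~10 wait via $\VV\subseteq\R_j[t-1]$, and the observation that the line~11 checks depend only on the message contents. The only cosmetic difference is your case split between Fault-Free and Global Liveness; the paper uses Global Liveness uniformly, which already covers both cases.
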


\begin{proof}
The proof is by induction.

{\em Induction basis:}
For round $t=0$,
node $i$ adds $(h,k,-1)$ to $\R_i[0]$ 
whenever it reliably receives message $(h,k,0)$. (For round 0 messages, $h$ is just
a single point.)
Then by {\em Global Liveness} property of reliable broadcast,
node $j$ will eventually reliably receive the same message, and add $(h,k,-1)$ to $\R_j[0]$.

{\em Induction:}
Consider round $t\geq 1$.
Assume that the statement of the lemma holds true through
rounds $t-1$. Therefore, if $(h,k, t-2)\in \R_i[t-1]$ at some point of time, then eventually
$(h,k, t-2)\in \R_j[t-1]$.

Now we will prove that the lemma holds for round $t$.
Suppose that at some time
$\mu$, $(h,k,t-1)\in \R_i[t]$.
Thus, node $i$ must have reliably received (in line 9 of round $t$)
a message of the form $((h,\VV),k,t)$ such that the following
conditions are true at some real time time $\mu$:
\begin{itemize}
\item Condition 1: $\VV\subseteq \R_i[t-1]$  (due to line 10, and the fact that $\R_i[t-1]$ can
	only grow with time)
\item Condition 2: $|\VV| \geq n-f$, $(*,k,t-2)\in \VV$ and $h=H(\VV,t-1)$ (due to line 11)
\end{itemize}
The correctness of the lemma through round $t-1$ implies 
that eventually each element of $\R_i[t-1]$ will be included
in $\R_j[t-1]$. Thus, because $\VV\subseteq \R_i[t-1]$ at time $\mu$, eventually $\VV\subseteq \R_j[t-1]$.
Also, the {\em Global Liveness} property 
implies that eventually node $j$ will reliably receive the
same message $((h,\VV),k,t)$ that was reliably received by node $i$; therefore,
as in Condition 2 above, node $j$ will also find that
$|\VV| \geq n-f$, $(*,k,t-2)\in \VV$ and $h=H(\VV,t-1)$.
Therefore, by lines 10-12, it follows that eventually
$(h,k,t-1)\in \R_j[t]$.
\end{proof}

\section{Proof of Lemma \ref{l_progress}}
\label{a_lemma_progress}

\noindent
{\bf Lemma \ref{l_progress}:}
{\em
 If all the fault-free nodes progress to the start of round $t$, $t \geq 0$, then all the
fault-free nodes will eventually progress to the start of round $t+1$. \\
}

\begin{proof}
The proof is by induction.
By assumption, all nodes begin round 0 eventually, and perform reliable broadcast
of their input (line 1).
By {\em Fault-Free Liveness} property of {\em reliable broadcast},
each fault-free node $i$ will
eventually reliably receive messages from all the $n-f$ fault-free nodes.
All the messages reliably received in round 0 result
in addition of an element to the set $\R_i[0]$ at fault-free node $i$ (line 3);
therefore, $\R_i[0]$ will eventually be of size at least $n-f$.
It follows that each fault-free node will eventually complete round 0,
and proceed to round 1 (lines 4-7).

Now we assume that all the fault-free nodes have progressed to the start of round $t$,
where $t\geq 1$, and prove that all the fault-free nodes will eventually progress
to the start of round $t+1$.

Consider fault-free nodes $i,j\in V-F$.
In line 8 of round $t$, fault-free node $j$ performs reliable broadcast of $((h_j[t-1],\R^c_j[t-1]),j,t)$.
By {\em Fault-free Liveness} of reliable broadcast, fault-free node $i$ will eventually
reliably receive message $((h_j[t-1],\R^c_j[t-1]),j,t)$ 
from fault-free node $j$.
By Claim \ref{c_common_verify}, eventually 
$\R^c_j[t-1] \subseteq \R_i[t-1]$; therefore, node $i$ will progress past line 10
in the handler for message $((h_j[t-1],\R^c_j[t-1]),j,t)$.
Moreover, since node $j$ is fault-free, it follows the algorithm
specification correctly. Therefore, the checks on line 11 in the handler
at node $i$ for message $((h_j[t-1],\R^c_j[t-1]),j,t)$ will all be correct.
Therefore, by lines 11-12, node $i$ will eventually include $(h_j[t-1],j,t-1)$ in $\R_i[t]$.
Since the above argument holds for all fault-free nodes $i,j$, it implies that
each fault-free node $i$ eventually adds $(h_j[t-1],j,t-1)$ to $\R_i[t]$, for each
the fault-free node $j$ (including $j=i$). Therefore, at each fault-free node $i$,
eventually,
$|\R_i[t]|\geq n-f$,
and $(h_i[t-1],i,t-1)\in \R_i[t]$ (because the previous statement holds for $j=i$ too), thus satisfying both the conditions
at line 13.
Thus, each fault-free node $i$ will eventually proceed to round $t+1$ (lines 13-16).

\end{proof}

\section{Claims \ref{c_ver} and  \ref{claim:j_verified}}
\label{a_claims}

\begin{claim}
\label{c_ver}
If faulty node $i$'s round $t$ execution is verified by a fault-free node $j$, then
the following statements hold:

(i) For $t\geq 0$, $\R^c_i[t]\geq n-f$ and $h_i[t]=H(\R^c_i[t],t)$,

(ii) For $t\geq 0$, eventually $\R^c_i[t]\subseteq \R_j[t]$, and

(iii) For $t\geq 1$, node $i$'s round $t-1$ execution is also verified by node $j$.
\end{claim}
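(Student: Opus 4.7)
The plan is to unpack Definition~\ref{d_verified} and then read off each of the three conclusions from the specific algorithmic conditions that the verifying fault-free node $j$ must have satisfied. By hypothesis, node $j$ eventually reliably receives some message $((h,\VV),i,t+1)$ from $i$ and then adds $(h,i,t)$ to $\R_j[t+1]$ via lines~9--12 of the round $t+1$ handler. The definitions in (\ref{e_faulty_h}) and (\ref{e_faulty_V}) identify this $h$ with $h_i[t]$ and this $\VV$ with $\R^c_i[t]$. So the whole proof reduces to inspecting which conditions on $h$ and $\VV$ must have held for $j$ to have reached line~12, together with what line~10 forced to be true beforehand.

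For part (i), I would invoke the line~11 checks in the round $t+1$ handler: $|\VV|\geq n-f$ and $h = H(\VV,t)$ (and, when $t\geq 1$, also $(*,i,t-1)\in\VV$). Substituting $\VV=\R^c_i[t]$ and $h=h_i[t]$ gives exactly $|\R^c_i[t]|\geq n-f$ and the identity $h_i[t]=H(\R^c_i[t],t)$ used earlier as (\ref{e_faulty_H}), for all $t\geq 0$. For part (ii), the same handler only advances past the wait on line~10 once $\VV\subseteq\R_j[t]$ becomes true; since $\R_j[t]$ is monotone non-decreasing, this containment persists from that moment onward, so $\R^c_i[t]=\VV\subseteq\R_j[t]$ holds eventually.

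For part (iii) with $t\geq 1$, the line~11 check implies $(*,i,t-1)\in\VV=\R^c_i[t]$, so there is some $h'$ with $(h',i,t-1)\in\R^c_i[t]$. By part (ii), eventually $(h',i,t-1)\in\R_j[t]$ as well. The only way a tuple of the form $(h',i,t-1)$ can ever enter $\R_j[t]$ is through line~12 of node $j$'s round $t$ handler, which is preceded by $j$ reliably receiving a message $((h',\VV''),i,t)$ from $i$ and passing the corresponding line~11 checks; by Definition~\ref{d_verified} this is exactly what it means for node $i$'s round $t-1$ execution to be verified by $j$. The main thing to keep straight is the index bookkeeping --- namely that the generic line~11 condition $(*,j,t-2)\in\VV$, instantiated with sender $i$ and message round $t+1$, reads as $(*,i,t-1)\in\VV$ --- after which all three parts fall out directly from the algorithm specification with no further machinery.
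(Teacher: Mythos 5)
Your proposal is correct and follows essentially the same route as the paper's proof: unpack Definition~\ref{d_verified}, read part (i) off the line~11 checks, part (ii) off the line~10 wait condition together with monotonicity of $\R_j[t]$, and part (iii) from the fact that $(*,i,t-1)\in\R^c_i[t]\subseteq\R_j[t]$ forces a corresponding round-$t$ message from $i$ to have been reliably received and accepted by $j$. Your explicit remark that a tuple $(h',i,t-1)$ can only enter $\R_j[t]$ via line~12 is just a slightly more spelled-out version of the paper's appeal to Definition~\ref{d_verified} at that step.
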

\begin{proof}
Let $t\geq 0$.
Suppose that node $i$'s round $t$ execution is verified by a fault-free node $j$.
In this case, we can use definitions (\ref{e_faulty_h}) and (\ref{e_faulty_V})
of $h_i[t]$ and $\R^c_i[t]$.
Definition \ref{d_verified} implies that node $j$ eventually 
reliably receives message $((h_i[t],\R^c_i[t]),i,t+1)$ from node $i$,
and subsequently adds (at line 12 of its round $t+1$) $(h_i[t],i,t)$ to $\R_j[t+1]$.
This implies that this message satisfies the checks done by node $j$ at line 11:
Specifially,
(a) $|\R^c_i[t]|\geq n-f$ and $h_i[t]=H(\R^c_i[t],t)$, and
(b) $(*,i,t-1)\in\R^c_i[t]$.
Also, by the time node $j$ adds $(h_i[t],i,t+1)$ to $\R_j[t+1]$, the condition
on line 10 also holds: specifically, $\R^c_i[t]\subseteq \R_j[t]$, proving
claim $(ii)$ stated above.
Also, (a) above proves claim $(i)$.

$(*,i,t-1)\in\R^c_i[t]$ and $\R^c_i[t]\subseteq \R_j[t]$
together imply 
that eventually $(*,i,t-1)\in \R_j[t]$.

Suppose that $t\geq 1$. Then
the above observation that eventually $(*,i,t-1)\in \R_j[t]$, and 
Definition \ref{d_verified}, imply that round $t-1$ execution of node $i$
is verified by node $j$. This proves $(iii)$.
\end{proof}

~

\begin{claim}
\label{claim:j_verified}
If faulty node $i$'s round $t$ execution is verified by a fault-free node $j$, $t \geq 0$,
then for all $r$ such that $0\leq r \leq t$, node $i$'s round $r$ execution is verified by node $j$.
\end{claim}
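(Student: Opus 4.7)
The plan is to obtain this claim as an almost immediate corollary of Claim~\ref{c_ver}, specifically its part (iii), by iterating that part backwards through the rounds. Concretely, I would proceed by downward induction on $r$, starting from $r = t$ and going down to $r = 0$.

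The base case $r = t$ holds by the hypothesis of the claim: node $i$'s round $t$ execution is verified by the fault-free node $j$. For the inductive step, suppose that for some $r$ with $1 \leq r \leq t$, node $i$'s round $r$ execution is verified by $j$. Since $r \geq 1$, Claim~\ref{c_ver}(iii) applies and yields that node $i$'s round $r-1$ execution is also verified by $j$. Iterating this from $r = t$ down through $r = 1$ covers every round index in $\{0, 1, \ldots, t\}$, which is exactly what the claim asserts.

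I do not anticipate any real obstacle, since the substantive work, namely showing that the reliable-broadcast message used to verify round $r$ at node $j$ necessarily forces $(*,i,r-1) \in \R^c_i[r]$ and then $(*,i,r-1) \in \R_j[r]$, is already carried out inside the proof of Claim~\ref{c_ver}. The only care needed is to make sure the induction terminates cleanly at $r = 0$ (we never try to invoke Claim~\ref{c_ver}(iii) with $t = 0$, since that part is stated only for $t \geq 1$), and to note that parts (i) and (ii) of Claim~\ref{c_ver} are not needed here. Thus the proof is one short induction whose entire content is a repeated appeal to Claim~\ref{c_ver}(iii).
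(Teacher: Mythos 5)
Your proof is correct and matches the paper's own argument, which likewise disposes of the $t=0$ case trivially and handles $t>0$ by repeated application of Claim~\ref{c_ver}(iii); your downward induction is just a slightly more formal phrasing of that same iteration.
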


\begin{proof}
The claim is trivially true for $t=0$.
The proof of the claim for $t>0$ follows by repeated application of Claim \ref{c_ver}(iii) above.
\comment{+++++++++++++++++++++++++++++++++++++++++++++++++
The proof is by induction on $t$.


{\em Induction Basis:} The lemma is trivially true for $t=0$.

{\em Induction:} As the induction hypothesis for $t>0$, assume that, 
if node $i$'s round $t-1$ execution is verified by some fault-free node $k$,
then for all $r$ such that $1\leq r \leq t-1$, node $i$'s round $r$ execution is verified
by node $k$.

Suppose that node $i$'s round $t$ execution is verified by a fault-free node $j$.
%
%
 By induction hypothesis,
for all $r$ such that $1\leq r \leq t-1$, node $i$'s round $r$ execution is verified
by node $j$. In addition, by assumption, round $t$ execution of node $i$ is verified by node
$j$. This proves the lemma.
+++++++++++++++++++++++++++++++++++++++++++++++++++++++++++}
\end{proof}

~


\section{Proof of Lemma \ref{lemma:J_in_H0}}
\label{a_lemma_J}

The proof of Lemma \ref{lemma:J_in_H0} uses the following theorem by Tverberg \cite{tverberg}:
\begin{theorem}
\label{thm:tverberg}
(Tverberg's Theorem \cite{tverberg}) For any integer $f \geq 1$, for every multiset $Y$ containing at least $(d+1)f+1$ points in a $d$-dimensional space, there exists a partition
$Y_1, .., Y_{f+1}$ of $Y$ into $f+1$ non-empty multisets such that $\cap_{l=1}^{f+1} \HH(Y_l) \neq \emptyset$.
\end{theorem}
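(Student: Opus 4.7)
The plan is to prove Tverberg's theorem via Sarkaria's tensor-product reduction to the \emph{Colorful Carath\'eodory Theorem} (CCT). Recall CCT: if $N \geq D+1$ sets (``color classes'') $C_1,\ldots,C_N \subseteq \mathbb{R}^D$ each contain the origin in their convex hull, then one can choose representatives $w_i \in C_i$ with $0 \in \HH(\{w_1,\ldots,w_N\})$. I would first invoke CCT as a black box (citing B\'ar\'any), and then exhibit a construction that repackages the Tverberg data into a CCT instance of dimension $D = (d+1)f$ with exactly $N = (d+1)f+1$ color classes.

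For the construction, enumerate the points of $Y$ as $y_1,\ldots,y_N \in \mathbb{R}^d$ with $N = (d+1)f+1$. Lift each point to $\hat{y}_i := (y_i, 1) \in \mathbb{R}^{d+1}$ (this extra coordinate will let me read off the partition weights later). Fix vectors $v_1,\ldots,v_{f+1} \in \mathbb{R}^f$ in ``generic'' position, concretely the vertices of a regular simplex centered at the origin, so that $\sum_{j=1}^{f+1} v_j = 0$ and any $f$ of them are linearly independent. Define the color classes
\[
C_i := \{\, \hat{y}_i \otimes v_j : 1 \leq j \leq f+1 \,\} \;\subseteq\; \mathbb{R}^{(d+1)f}.
\]
The identity $\frac{1}{f+1}\sum_j \hat{y}_i \otimes v_j = \hat{y}_i \otimes \bigl(\frac{1}{f+1}\sum_j v_j\bigr) = 0$ shows that each $C_i$ contains the origin in its convex hull. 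Applying CCT yields representatives $w_i = \hat{y}_i \otimes v_{\sigma(i)}$, for some map $\sigma : \{1,\ldots,N\} \to \{1,\ldots,f+1\}$, together with nonnegative weights $\lambda_i$ summing to $1$, satisfying $\sum_i \lambda_i \,(\hat{y}_i \otimes v_{\sigma(i)}) = 0$.

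To extract the Tverberg partition, regroup by the value of $\sigma(i)$: setting $z_j := \sum_{i:\sigma(i)=j} \lambda_i\, \hat{y}_i \in \mathbb{R}^{d+1}$, the identity becomes $\sum_{j=1}^{f+1} z_j \otimes v_j = 0$. The kernel of the linear map $(z_1,\ldots,z_{f+1}) \mapsto \sum_j z_j \otimes v_j$ is governed by the space of linear dependencies among the $v_j$; since any $f$ of them are linearly independent, this kernel is one-dimensional and is spanned by the diagonal $z_1 = \cdots = z_{f+1}$. Hence all $z_j$ coincide with a common $z \in \mathbb{R}^{d+1}$. Reading off the last coordinate of $z$, the partial sum $\sum_{i:\sigma(i)=j}\lambda_i$ takes the same value for every $j$; since these sums total $1$ across $f+1$ groups, each equals $\tfrac{1}{f+1} > 0$. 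Consequently every block $Y_j := \{y_i : \sigma(i) = j\}$ is non-empty, and the point obtained by scaling the first $d$ coordinates of $z$ by $f+1$ is an explicit convex combination of $Y_j$ for every $j$, witnessing $\cap_{j=1}^{f+1}\HH(Y_j) \neq \emptyset$.

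The main obstacle is the rank/kernel analysis of $\sum_j z_j \otimes v_j = 0$: the step that promotes ``any $f$ of the $v_j$ linearly independent'' into ``all $z_j$ are equal'' needs one to identify the tensor relation with a linear map between finite-dimensional spaces whose kernel is dictated purely by linear dependencies in $\{v_j\}$; this requires a careful bookkeeping argument but is routine once the setup is right. A secondary obstacle, if one does not wish to cite CCT, is supplying a proof of CCT itself, which admits a short B\'ar\'any-style pivoting argument: starting from any colorful transversal, iteratively swap out the representative lying on the facet of the current convex hull nearest the origin; each swap strictly decreases the distance from the colorful convex hull to $0$, and termination forces $0$ into the hull.
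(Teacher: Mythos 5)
Your proposal is correct, but note that the paper does not prove this statement at all: Tverberg's theorem is invoked purely as a black box, cited to Perles--Sigron, and used only in Appendix F to show $h_i[0]\neq\emptyset$. What you have written is the standard Sarkaria tensor-trick proof (in the B\'ar\'any--Onn formulation), and it is essentially sound: the dimension count $D=(d+1)f$, $N=D+1$ matches the hypothesis exactly; each color class $C_i$ does contain the origin because $\sum_j v_j=0$; and the kernel analysis is right --- since the $v_j$ span $\mathbb{R}^f$, the space of dependencies $\sum_j c_j v_j=0$ is one-dimensional and spanned by the all-ones vector, so applying this coordinate-wise in $\mathbb{R}^{d+1}$ forces $z_1=\cdots=z_{f+1}$, and the lifted last coordinate then gives each group total weight $\frac{1}{f+1}>0$, hence non-emptiness of every block and a common point $(f+1)z$ restricted to the first $d$ coordinates. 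Two small finishing touches are needed: (i) the theorem as stated allows $|Y|>(d+1)f+1$, whereas your construction takes exactly $(d+1)f+1$ points, so you should apply the argument to a sub-multiset of that size and assign the leftover points to arbitrary blocks (this cannot shrink any $\HH(Y_l)$); and (ii) the Colorful Carath\'eodory Theorem must either be cited or proved --- your sketched pivoting argument is the right one, but the termination claim (that the distance from the colorful hull to the origin strictly decreases and cannot decrease forever through infinitely many transversals) deserves the one extra sentence that there are only finitely many colorful transversals. Your route is self-contained and elementary modulo CCT; the paper's route is simply to cite the literature, which is entirely adequate for its purposes.
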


~

\noindent
Now we prove Lemma \ref{lemma:J_in_H0}. \\

\noindent{\bf Lemma \ref{lemma:J_in_H0}:}
{\em
 For each node $i\in V-\overline{F_v}[0]$, the polytope $h_i[0]$ is non-empty. \\
}

\begin{proof}
%
Note that $V-\overline{F_v}[0]=(V-F)\cup F_v[0]$.
\begin{itemize}
\item
For a fault-free node $i\in V-F$, since it behaves correctly,
$|\R^c_i[0]|\geq n-f$ and
$h_i[0]=H(\R^c_i[0],0)$, due to lines 4-7.
\item
For faulty node $i\in F_v[0]$ as well, by Claim \ref{c_ver}(i)
in Appendix \ref{a_claims},
$|\R^c_i[0]|\geq n-f$ and
$h_i[0]=H(\R^c_i[0],0)$.
\end{itemize}
Thus, for each $i\in V-\overline{F_v}[0]$,
$|\R^c_i[0]|\geq n-f$ and
$h_i[0]=H(\R^c_i[0],0)$.

Consider any $i\in V-\overline{F_v}[0]$.
Consider the computation of polytope $h_i[0]$ as $H(\R^c_i[0],0)$.
By step 1 of function $H$ in Section \ref{s_ops}, $|X|=|\R^c_i[0]|\geq n-f$.
Recall that, due to the lower bound on $n$ discussed in Section \ref{s_intro},
we assume $n \geq (d+2)f+1$.
Thus, in function $H$, $|X| \geq n-f \geq (d+1)f+1$. By Theorem \ref{thm:tverberg}, there exists a partition $X_1, X_2, \cdots, X_{f+1}$ of $X$ into multisets $X_j$ such that $\cap_{j=1}^{f+1} \HH(X_j) \neq \emptyset$. Let us define
\begin{equation}
\label{eq:J}
J = \cap_{i=1}^{f+1} \HH(X_j)
\end{equation}
Thus, $J$ is non-empty.
In step 2 of function $H$ (for $t=0$), because $|X|\geq n-f$,
each multiset $C$ used in the computation of function $H$ is of size at least $n-2f$.
Thus, each $C$ excludes only $f$ elements of $X$, whereas there are $f+1$ multisets
in the above partition of $X$.
Therefore, each set $C$ in step 2 of function $H$ will fully contain at least one multiset $X_j$
from the partition. Therefore, $\HH(C)$
will contain $J$. Since this holds true for all $C$'s, $J$ is contained in
the convex polytope computed by function $H$.
Since $J$ is non-empty, $h_i[0]=H(\R^c_i[0],0)$ is non-empty.

\end{proof}

\section{Claim \ref{claim:notFv_not_verified}}
\label{app_s_claim:notFv_not_verified}

\begin{claim}
\label{claim:notFv_not_verified}
For $t\geq 0$,
if $b \in \overline{F_v}[t]$, then for all $i\in V-\overline{F_v}[t+1]$,
$(*,b,t)\not\in\R^c_i[t+1]$.
\end{claim}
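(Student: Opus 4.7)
The plan is to prove the claim by contraposition: assume $(h,b,t)\in\R^c_i[t+1]$ for some $i\in V-\overline{F_v}[t+1]$, and derive that $b\in F_v[t]$, contradicting $b\in\overline{F_v}[t]$. Since $V-\overline{F_v}[t+1]=(V-F)\cup F_v[t+1]$, the argument naturally splits into two cases, each of which reduces to the same observation: any element $(h,b,t)$ that appears in a set $\R_j[t+1]$ at a \emph{fault-free} node $j$ witnesses verification of $b$'s round $t$ execution.

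First, consider $i\in V-F$. Because $\R^c_i[t+1]$ is simply a snapshot of $\R_i[t+1]$ taken at line 14, having $(h,b,t)\in \R^c_i[t+1]$ forces $(h,b,t)\in \R_i[t+1]$. The only mechanism in the algorithm that adds such a triple to $\R_i[t+1]$ is line 12, which is executed only after node $i$ has reliably received some message $((h,\VV),b,t+1)$ from $b$ and passed the checks at lines 10--11. But then Definition \ref{d_verified} applies directly: $b$'s round $t$ execution is verified by the fault-free node $i$, hence $b\in F_v[t]$, contradicting the hypothesis.

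Next, consider the more delicate case $i\in F_v[t+1]$. By definition of $F_v[t+1]$ there exists a fault-free node $j$ that verifies $i$'s round $t+1$ execution. Applying Claim \ref{c_ver}(ii) to this verification, we obtain $\R^c_i[t+1]\subseteq \R_j[t+1]$ eventually, so the hypothesized element $(h,b,t)$ migrates into $\R_j[t+1]$. Since $j$ is fault-free, the argument from the previous case applies verbatim with $j$ in the role of the verifier: the presence of $(h,b,t)$ in $\R_j[t+1]$ can only have arisen via line 12 after $j$ reliably received an appropriate round $t+1$ message from $b$, so $b$'s round $t$ execution is verified by $j$, and again $b\in F_v[t]$, a contradiction.

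The only subtle step is the second case, where we must move the putative triple out of the (possibly misbehaving) faulty node $i$ and into a fault-free witness $j$; Claim \ref{c_ver}(ii) is exactly the tool that makes this transfer work, because verification of $i$'s round $t+1$ execution forces the entire set $\R^c_i[t+1]$ that $i$ broadcast to be a subset of the verifier's own $\R_j[t+1]$. Once the triple is relocated to a fault-free node, Definition \ref{d_verified} closes the argument. No induction on $t$ is needed; a single case split plus Claim \ref{c_ver}(ii) suffices.
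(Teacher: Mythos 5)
Your proof is correct and follows essentially the same route as the paper's: a case split over $i\in V-F$ versus $i\in F_v[t+1]$, with the second case resolved by using Claim~\ref{c_ver}(ii) to transfer the offending element into a fault-free node's $\R_j[t+1]$ and then invoking Definition~\ref{d_verified}. The only cosmetic difference is that the paper states the first case directly (unverified implies never added) while you phrase it by contradiction; the content is identical.
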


\begin{proof}
Consider faulty node $b\in \overline{F_v}[t]$.
Note that $V-\overline{F_v}[t+1]=(V-F)\cup F_v[t+1]$.

\begin{itemize}
\item Consider a fault-free node $i\in V-F$.
Since $b\in \overline{F_v}[t]$, node $b$'s round $t$ execution is {\em not} verified
by {\em any} fault-free node. Therefore, by Definition \ref{d_verified},
for fault-free node $i\in V-F$, {\bf at all times},
$(*,b,t)\not\in \R_i[t+1]$.
Therefore, by line 14, $(*,b,t)\not\in \R^c_i[t+1]$.

\item
Consider a faulty node $i\in F_v[t+1]$.
In this case,
the proof is by contradiction. In particular, for some $h$, assume that $(h,b,t) \in \R^c_i[t+1]$.
Since $i\in F_v[t+1]$, there exists a fault-free node $j$ that verifies the
round $t+1$ execution of node $i$. Therefore, by Claim \ref{c_ver}(ii) in
Appendix \ref{a_claims},
eventually $\R^c_i[t+1]\subseteq \R_j[t+1]$.  This observation, along with the above assumption
that $(h,b,t) \in \R^c_i[t+1]$, implies that eventually $(h,b,t)\in \R_j[t+1]$.
Since node $j$ is fault-free, Definition \ref{d_verified} implies that execution of
node $b$ in round $t$ is verified, and hence $b\in F_v[t]$. This is a contradiction.
Therefore, $(*,b,t)\not\in \R^c_i[t+1]$.
\end{itemize}
\end{proof}

\section{Proof of Lemma \ref{lemma:always_notFv_not_verified}}
\label{a_always}

\noindent{\bf Lemma \ref{lemma:always_notFv_not_verified}:}
{\em
 For $r\geq 0$,
if $b \in \overline{F_v}[r]$, then for all $\tau\geq r$,
\begin{itemize}
\item $b \in \overline{F_v}[\tau]$, and 
\item for all $i\in V-\overline{F_v}[\tau+1]$, $(*,b,\tau)\not\in\R^c_i[\tau+1]$.
\end{itemize}
}

\begin{proof}

Recall that $F_v[r]\subseteq F$, and $\overline{F_v}[r] = F-F_v[r]$.

For $r \geq 0$, consider a faulty node $b \in \overline{F_v}[r]$.
Thus, $b\in F$.

We first prove that
$b \in \overline{F_v}[\tau]$, for $\tau\geq r$.
This is trivially true for $\tau=r$. So we only need to prove this
for $\tau>r$.
The proof is by contradiction.

Suppose that
there exists $\tau>r$ such that $b \not\in \overline{F_v}[\tau]$.
Thus, $b\in F_v[\tau]$.
The definition of $\overline{F_v}[\tau]$ implies that 
node $b$'s round $\tau$ execution is verified by some fault-free node $j$.
Then Claim \ref{claim:j_verified} implies that
node $b$'s round $r$ execution is verified
by node $j$.
Hence by the definition of $F_v[r]$, $b\in F_v[r]$. This is a contradiction. 
This proves that $b\in\overline{F_v}[\tau]$.

Now, since $b\in\overline{F_v}[\tau]$, by
Claim \ref{claim:notFv_not_verified},
for all $i\in V-\overline{F_v}[\tau+1]$, $(*,b,\tau)\not\in\R^c_i[\tau+1]$.

\end{proof}

\section{Claims \ref{c_single}, \ref{c_common} and \ref{claim:at_least_one_verified}}
\label{a_claim:one}

\begin{claim}
\label{c_single}
For $t \geq 0$, a fault-free node $i$ adds at most one message from node $j$ to $\R_i[t]$, even if $j$ is faulty.
\end{claim}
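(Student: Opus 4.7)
The plan is to perform a case analysis on whether $t = 0$ or $t \geq 1$, and in each case identify the unique line of the pseudocode through which a tuple of the form $(*, j, t-1)$ (respectively $(*, j, -1)$ for $t=0$) can be inserted into $\R_i[t]$. Since $\R_i[t]$ is a set, it then suffices to verify that every insertion corresponding to sender $j$ and round $t$ contributes the identical tuple; uniqueness will then come from the reliable broadcast primitive.

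For $t = 0$, the only statement that ever adds a tuple with second coordinate $j$ to $\R_i[0]$ is line 3, which is triggered by the event $\RBRecv(x, j, 0)$ at node $i$. I would then invoke Global Uniqueness of reliable broadcast \emph{specialized to a single receiver}: taking both fault-free receivers to be $i$ itself, if $i$ reliably receives $(v, j, 0)$ and $(w, j, 0)$, then $v = w$. Hence every invocation of the line-3 handler triggered by sender $j$ in round $0$ inserts the same tuple $(x, j, -1)$ into $\R_i[0]$, and the set-union semantics then immediately yield at most one such tuple. For $t \geq 1$, the only statement that ever adds a tuple with second coordinate $j$ to $\R_i[t]$ is line 12, executed within the atomic block following a successful check at line 11 for event $\RBRecv((h, \VV), j, t)$. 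The identical appeal to Global Uniqueness at the single receiver $i$ guarantees that the payload $(h, \VV)$ is the same across all such handler invocations, so only the tuple $(h, j, t-1)$ ever gets unioned into $\R_i[t]$.

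The key point to highlight is that the argument does \emph{not} require any assumption on the behaviour of $j$: even if $j$ is Byzantine and tries to inject multiple conflicting payloads for round $t$, the reliable broadcast primitive ensures that the fault-free node $i$ only ever ``hears'' one payload for the pair $(j, t)$. The only subtle step is the specialization of Global Uniqueness to a single receiver (which the stated property permits by letting the two fault-free receivers coincide); this is the sole mildly non-trivial observation in an otherwise straightforward pseudocode-tracing argument.
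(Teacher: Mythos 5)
Your proof is correct and follows essentially the same route as the paper's: the paper likewise observes that the only way a tuple attributed to $j$ enters $\R_i[t]$ is via a reliably received message of the form $(*,j,t)$, and that the reliable broadcast primitive prevents node $i$ from receiving two distinct such messages. Your explicit specialization of Global Uniqueness to a single receiver and the appeal to set-union semantics just make precise what the paper states more tersely.
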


\begin{proof}
As stated in the properties of the primitive in Section \ref{s_ops}, each fault-free
node $i$ will reliably receive at most one message of the form $(*,j,t)$ from node $j$. Since $\R_i[t]$ only contains tuples corresponding to reliably received
messages, the claim follows.
\end{proof}

\begin{claim}
\label{c_common}
For $t\geq 1$, consider nodes $i,j\in V-\overline{F_v}[t]$.
If $(h,k,t)\in \R^c_i[t]$ and $(h',k,t)\in\R^c_j[t]$, then
$h=h'$.
\end{claim}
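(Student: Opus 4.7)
The approach is to reduce the claim to the Global Uniqueness property of reliable broadcast. The key observation is that every tuple of the form $(h,k,t-1)$ sitting in $\R^c_i[t]$, for any $i\in V-\overline{F_v}[t]=(V-F)\cup F_v[t]$, can be traced back to a fault-free node that reliably received a round-$t$ message $((h,\VV),k,t)$ originating from $k$. Once both $i$ and $j$ are handled this way, Global Uniqueness forces the two received messages to coincide, and in particular their first components agree.

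Concretely, I would split on whether $i$ is fault-free or faulty. If $i\in V-F$, then the tuple $(h,k,t-1)$ was added to $\R_i[t]$ at line~12, so $i$ itself reliably received $((h,\VV),k,t)$; since $i$ is fault-free, this is exactly the witness I want. If instead $i\in F_v[t]$, Claim~\ref{c_ver}(ii) supplies a fault-free verifier $\ell$ with $\R^c_i[t]\subseteq\R_\ell[t]$. Hence $(h,k,t-1)\in\R_\ell[t]$, and because $\ell$ is fault-free, the tuple was inserted via lines 9--12 of $\ell$'s round-$t$ handler, meaning $\ell$ reliably received some $((h,\VV),k,t)$. Applying the identical argument to $j$ and $(h',k,t-1)\in\R^c_j[t]$ yields a (possibly different) fault-free node that reliably received $((h',\VV'),k,t)$.

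At this point I invoke Global Uniqueness from Section~\ref{s_ops}: if two fault-free nodes reliably receive messages from the same sender $k$ with the same round index $t$, those messages must be identical. Therefore $(h,\VV)=(h',\VV')$, and in particular $h=h'$, as required.

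The main obstacle is the case $i\in F_v[t]$ (and symmetrically $j\in F_v[t]$), where $i$'s own behavior cannot be trusted to follow lines 9--12. The essential tool for overcoming this is Claim~\ref{c_ver}(ii), which converts membership in a faulty node's $\R^c_i[t]$ into membership in some fault-free verifier's $\R_\ell[t]$; after this conversion the argument becomes routine, since every admissible path to an entry in a fault-free node's $\R$-set is through a reliably-received broadcast, and Global Uniqueness handles the rest.
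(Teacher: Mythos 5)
Your proof is correct and follows essentially the same route as the paper's: the paper enumerates the four cases ($i,j$ each fault-free or in $F_v[t]$) and in each case invokes Claim~\ref{c_ver}(ii) to obtain a fault-free witness whose $\R$-set contains the tuple, then applies Global Uniqueness, which is exactly your two-step reduction applied symmetrically to $i$ and $j$. Your uniform phrasing (produce one fault-free receiver per node, then compare) is just a tidier packaging of the same argument.
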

\begin{proof}
We consider four cases:
\begin{itemize}
\item $i,j\in V-F$:
In this case, due to {\em Global Uniqueness} property of reliable broadcast,
nodes $i$ and $j$ cannot reliably receive different round $t$ messages from the same
node. Hence the claim follows.

\item $i\in V-F$ and $j\in F_v[t]$:
Suppose that fault-free node $p$ verifies round $t$ execution of node $j$. Then
by Claim \ref{c_ver}(ii), eventually $\R^c_j[t]\subseteq \R_p[t]$.
Since nodes $i$ and $p$ are both fault-free. Therefore, similar to the previous
case, due to the {\em Global Uniqueness} property, nodes $i$ and $p$
cannot reliably receive distinct round $t$ messages.
Thus, if $(h,k,t)\in \R^c_i[t]$ and $(h',k,t)\in\R^c_j[t]\subseteq \R_p[t]$,
then $h=h'$.

\item $j\in V-F$ and $i\in F_v[t]$: This case is similar to the previous case.
\item $i,j\in F_v[t]$:
In this case, there exist fault-free nodes $k_i$ and $k_j$
that verify round $t$ execution of nodes $i$ and $j$, respectively.
Thus, by Claim \ref{c_ver}(ii), eventually $(h,i,t)\in \R^c_i[t]\subseteq \R_{k_i}[t]$ and $(h',i,t)\in\R^c_j[t]\subseteq \R_{k_j}[t]$.
Since $k_i,k_j$ are fault-free, {\em Global Uniqueness} implies that $h=h'$.
\end{itemize}
\end{proof}

\begin{claim}
\label{claim:at_least_one_verified}
For $t\geq 1$,
consider nodes $i, j \in V - \overline{F_v}[t]$. There exists a fault-free node $g\in V-F$ such that $(h_g[t-1],g,t-1) \in \R^c_i[t]\cap\R^c_j[t]$.
\end{claim}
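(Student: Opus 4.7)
The plan is to combine a pigeonhole argument on the sender identifiers appearing in $\R^c_i[t]$ and $\R^c_j[t]$ with Claim~\ref{c_common} and the Global Uniqueness property of reliable broadcast. First I would establish that $|\R^c_i[t]|, |\R^c_j[t]| \geq n-f$: for a fault-free node this follows directly from the guard on line~13, while for a faulty node in $F_v[t]$ it is exactly Claim~\ref{c_ver}(i). Next I would argue that the tuples in $\R^c_i[t]$ have pairwise distinct sender fields. For fault-free $i$ this is immediate from Claim~\ref{c_single} applied to $\R_i[t]\supseteq\R^c_i[t]$. For $i\in F_v[t]$, Claim~\ref{c_ver}(ii) supplies a fault-free $p$ with $\R^c_i[t]\subseteq\R_p[t]$, so Claim~\ref{c_single} applies to $p$.

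Let $S_i,S_j\subseteq V$ be the sender-identifier sets of $\R^c_i[t]$ and $\R^c_j[t]$. The previous step gives $|S_i|,|S_j|\geq n-f$, so inclusion–exclusion inside $V$ yields $|S_i\cap S_j|\geq n-2f$. At most $f$ of these common senders can be faulty, so at least $n-3f$ fault-free nodes lie in $S_i\cap S_j$. Since $n\geq(d+2)f+1\geq 3f+1$, this quantity is at least $1$, and hence there is a fault-free $g\in S_i\cap S_j$.

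For such $g$, pick $h,h'$ with $(h,g,t-1)\in\R^c_i[t]$ and $(h',g,t-1)\in\R^c_j[t]$. Because $i,j\in V-\overline{F_v}[t]$, Claim~\ref{c_common} forces $h=h'$; call this common value $h^\star$. Finally, since $g$ is fault-free it executes $\RBSend((h_g[t-1],\R^c_g[t-1]),g,t)$ exactly once in round~$t$. By Global Uniqueness, the only round-$t$ message of the form $((\cdot,\cdot),g,t)$ that any node can reliably receive is this one, so whenever a tuple $(\cdot,g,t-1)$ is added to some $\R_\cdot[t]$ at line~12, its first component equals $h_g[t-1]$. For fault-free $i$ this directly shows $h^\star = h_g[t-1]$; for $i\in F_v[t]$ we invoke Claim~\ref{c_ver}(ii) once more to move the tuple into a fault-free node's $\R[t]$ and conclude identically. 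Therefore $(h_g[t-1],g,t-1)\in\R^c_i[t]\cap\R^c_j[t]$.

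The main obstacle is bookkeeping rather than a deep idea: making sure the sender fields in $\R^c_\cdot[t]$ are genuinely distinct even when the owning node is faulty, and that a fault-free sender $g$'s payload is pinned down to $h_g[t-1]$ uniformly for the four combinations of fault-free/verified-faulty values of $i$ and $j$. Both points are handled cleanly by Claim~\ref{c_ver}(ii) together with Global Uniqueness, with the numerical inequality $n-3f\geq 1$ coming for free from the standing assumption $n\geq(d+2)f+1$.
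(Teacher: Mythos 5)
Your proof is correct and follows essentially the same route as the paper's: bound $|\R^c_i[t]|,|\R^c_j[t]|\geq n-f$ via line~13 and Claim~\ref{c_ver}(i), use Claim~\ref{c_single} (lifted to faulty owners via Claim~\ref{c_ver}(ii)) to get distinct senders, intersect to obtain at least $n-2f$ common senders, and pigeonhole over the at most $f$ faulty ones. Your explicit final step pinning the common payload to $h_g[t-1]$ via Global Uniqueness is a detail the paper leaves implicit, but it is the same argument.
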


\begin{proof}
For any fault-free node, say $p$, due to the conditions checked on line 13,
$|\R^c_p[t]|\geq n-f$.
For a node $k\in F_v[t]$, recall that $h_k[t]$ and $\R^c_k[t]$ are defined in
(\ref{e_faulty_h}) and (\ref{e_faulty_V}).
Thus, by Definition \ref{d_verified}, there exists some fault-free node, say $q$, that reliably receives
message $((h_k[t],\R^c_k[t]),k,t+1)$ from node $k$ in round $t+1$, and after performing checks
on line 13, adds $(h_k[t],k,t)$ to $\R^c_q[t+1]$. The checks on line 13, performed by 
fault-free node $q$, ensure that $|\R^c_k[t]|\geq n-f$.

Above argument implies that for the nodes $i,j\in V-\overline{F_v}[t]$,
$\R^c_i[t]$ and $\R^c_j[t]$ both contain at least $n-f$ messages.
Therefore, by Claims \ref{c_single} and \ref{c_common}, there will be at least $n-2f \geq df+1\geq f+1$ elements in $\R^c_i[t]\cap\R^c_j[t]$.
Since $f$ is the upper bound on the number of faulty nodes,
at least one element in $\R^c_i[t]\cap\R^c_j[t]$ corresponds to
a fault-free node, say node $g\in V-F$. That is, there exists $g\in V-F$ such that
$(h_g[t-1],g,t-1) \in \R^c_i[t] \cap \R^c_j[t]$.

\end{proof}

\section{Proof of Lemma \ref{lemma:transition_matrix}}
\label{a_t_matrix}

\noindent{\bf Lemma 
\ref{lemma:transition_matrix}:}
{\em
 For $t\geq 1$,
transition matrix $\bfM[t]$ constructed using the above procedure satisfies the following conditions.
\begin{itemize}
\item For $i,j \in V$, there exists a fault-free node $g(i,j)$ such that $\bfM_{ig(i,j)}[t] \geq \frac{1}{n}$.

\item $\bfM[t]$ is a row stochastic matrix, and $\lambda(\bfM[t])\leq 1-\frac{1}{n}$.


\end{itemize}
}

\begin{proof}

\begin{itemize}
\item To prove the first claim in the lemmas, we consider four cases for node pairs $i,j$.

(i) $i,j\in V-\overline{F_v}[t]$:

 By Claim \ref{claim:at_least_one_verified},
there exists a node $g(i,j)$ such that
$(h_{g(i,j)}[t-1],g(i,j),t-1) \in \R^c_i[t]\cap \R^c_j[t]$.
By (\ref{eq:matrix_i}) in the procedure to construct $\bfM[t]$, $\matrixm_{ig(i,j)}[t]=\frac{1}{|\R^c_i[t]|}\geq \frac{1}{n}$
and 
$\matrixm_{jg(i,j)}[t]=\frac{1}{|\R^c_j[t]|}\geq\frac{1}{n}$.

(ii) $i\in\overline{F_v}[t]$ and $j\in V-\overline{F_v}[t]$:

$|\R^c_j[t]|\geq n-f$ elements of
$\matrixm_j[t]$ are equal to $\frac{1}{|\R^c_j[t]|}\geq \frac{1}{n}$.
Since $n-f\geq (d+1)f+1\geq 2f+1$, there exists a fault-free node $g(i,j)$ such that
$\matrixm_{jg(i,j)}\geq \frac{1}{n}$.
By (\ref{e_fv}), all elements of $\matrixm_i[t]$, including $\matrixm_{ig(i,j)}[t]=\frac{1}{n}$.

(iii) $j\in\overline{F_v}[t]$ and $i\in V-\overline{F_v}[t]$:
Similar to case (ii).

(iv) $i,j\in\overline{F_v}[t]$: 

By (\ref{e_fv}) in the procedure to construct $\bfM[t]$, all $n$ elements in $\matrixm_i[t]$ and $\matrixm_j[t]$
both equal $\frac{1}{n}$. Choose a fault-free node as node $g(i,j)$.
Then
$\matrixm_{ig(i,j)}[t]=
\matrixm_{ig(i,j)}[t]=\frac{1}{n}$.

\item Observe that, by construction, for each $i\in V$, the row vector $\bfM_i[t]$ 
is stochastic. Thus, $\bfM[t]$ is row stochastic.
Also, due to the claim proved in the previous item, and
Claim \ref{c_lambda_bound}, $\lambda(\matrixm[t])\leq 1-\frac{1}{n}<1$.

\end{itemize}
\end{proof}

\section{Proof of Lemma \ref{lemma:valid_initial_hull}}
\label{app_s_lemma:valid_initial_hull}

\noindent{\bf Lemma \ref{lemma:valid_initial_hull}:} {\em
 $h_i[0]$ for each node $i \in \sv - \overline{F_v}[0]$ is valid. \\
}

\begin{proof}
 Consider two cases:

\begin{itemize}
\item $i \in \sv - F$:
Due to Claim \ref{c_single} in Appendix \ref{a_claim:one} and the fact that $i$ is fault-free $\R^c_i[0]$ contains at most $f$ elements
corresponding to
faulty nodes. Recall that $h_i[0]$ is obtained using function $H(\R^c_i[0],0)$.
Then, due to the {\em Fault-Free Integrity} property of reliable broadcast,
and the definition of function $H$, at least one set $C$ used
in item 2 of function $H$ will contain only the inputs of fault-free nodes. 
Therefore, $h_i[0]$ is in the convex hull of the inputs at fault-free nodes.
That is, $h_i[0]$ is valid.

\item $i \in F_v[0]$:
Suppose that round $0$ execution of node $i$ is verified by fault-free node $j$.
By Claim \ref{c_ver} in Appendix \ref{a_claims}, $h_i[0]=H(\R^c_i[0],0)$, and eventually
$\R^c_i[0]\subseteq \R_j[0]$.
Thus, eventually, $H(\R^c_i[0],0)\subseteq H(\R_j[0],0)$.
Since $j$ is fault-free, and $|\R_j[0]|\geq n-f$, by an argument similar to the previous item,
$H(\R_j[0],0)$ is valid. This implies that $h_i[0]=H(\R^c_i[0],0)$ is also valid.

\end{itemize}

\end{proof}

\section{Proof of Lemma \ref{lemma:linear_valid}}
\label{app_s_lemma:linear_valid}

The proof is straightforward, but included here for completeness.

\noindent{\bf Lemma \ref{lemma:linear_valid}:} {\em
 Suppose non-empty convex polytopes $h_1, h_2, \cdots, h_k$ are all valid. Consider $k$ constants $c_1, c_2, \cdots, c_k$ such that $0 \leq c_i \leq 1$ and $\sum_{i = 1}^k c_i = 1$.
Then the linear combination of these convex polytopes,
$H_l(h_1, h_2, \cdots, h_k; c_1, c_2, \cdots, c_k)$, is valid. \\
}

\begin{proof}

Observe that the points in $H_l(h_1,\cdots,h_k;c_1,\cdots,c_k)$ are convex
combinations of the points in $h_1,\cdots,h_k$, because $\sum_{i=1}^k c_i = 1$ and $0\leq c_i\leq 1$,
for $1\leq i\leq k$.
Let $G$ be the set of input vectors at the fault-free nodes
in $V-F$. Then, $\HH(G)$ is the convex hull of the inputs at the fault-free nodes.
Since $h_i$, $1\leq i\leq k$, is valid, each point $p\in h_i$ is in $\HH(G)$.
Since $\HH(G)$ is a convex polytope, it follows that any convex combination
of the points in $h_1,\cdots,h_k$ is also in $\HH(G)$.

\end{proof}

\section{Algebraic Manipulation in the Proof of Theorem \ref{thm:correctness}}
\label{a_hausdorff}

\begin{align}
d(p_i^*, p_j^*) &=  \sqrt{\sum_{l=1}^d (p_i^*(l) - p_j^*(l))^2} \nonumber\\
&= \sqrt{\sum_{l=1}^d \left(\sum_{k\in V-\overline{F_v}[0]} \bfM^*_{ik} p_k(l) - \sum_{k\in V-\overline{F_v}[0]} \bfM^*_{jk} p_k(l)\right)^2}\nonumber ~~~~ \text{by (\ref{pi}) and (\ref{pj}})\\
&= \sqrt{\sum_{l=1}^d \left(\sum_{k\in V-\overline{F_v}[0]} (\bfM^*_{ik}-\bfM^*_{jk}) p_k(l)\right)^2}\nonumber\\
&\leq \sqrt{\sum_{l=1}^d \left[\alpha^{2t}\left( \sum_{k\in V-\overline{F_v}[0]} \| p_k(l) \|\right)^2\right]} 
~~~~~~~ \mbox{~~~~ by (\ref{eq:delta1})}\nonumber\\
&= \alpha^t \sqrt{\sum_{l=1}^d \left(\sum_{k\in V-\overline{F_v}[0]} \| p_k(l) \| \right)^2}\label{eq:d_pipj}
\end{align}

\section{Proof of Lemma \ref{lemma:opt}}
\label{a_opt}

We first prove a claim that will be used in the proof of Lemma \ref{lemma:opt}.

\begin{claim}
\label{claim:M*}
For $t \geq 1$, define $\bfM'[t] = \Pi_{\tau=1}^t \bfM[\tau]$. Then, for all nodes $j \in V - \overline{F_v}[t]$, and $k \in \overline{F_v}[0]$, $\bfM'_{jk}[t] = 0$.
\end{claim}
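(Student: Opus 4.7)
The plan is to argue by induction on $t$, leveraging the backward-product convention $\bfM'[t] = \bfM[t]\,\bfM'[t-1]$ together with Lemma \ref{lemma:always_notFv_not_verified} and the structural facts about nonzero entries of each transition matrix that were already established inside the proof of Theorem \ref{t_M}.

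For the base case $t=1$, one has $\bfM'[1]=\bfM[1]$. Fix $j\in V-\overline{F_v}[1]$ and $k\in\overline{F_v}[0]$. Applying Lemma \ref{lemma:always_notFv_not_verified} with $r=0$ and $\tau=0$ to the faulty node $k$ gives that $(*,k,0)\notin \R^c_j[1]$. Hence by the construction rule (\ref{eq:matrix_i-2}) used for rows indexed by $V-\overline{F_v}[1]$, we conclude $\bfM_{jk}[1]=0$, which is exactly what is needed.

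For the inductive step, assume the claim holds at $t-1$, i.e.\ $\bfM'_{mk}[t-1]=0$ for every $m\in V-\overline{F_v}[t-1]$ and every $k\in\overline{F_v}[0]$. Expanding the product,
\[
\bfM'_{jk}[t] \;=\; \sum_{m\in V}\bfM_{jm}[t]\;\bfM'_{mk}[t-1],
\]
so it suffices to show that in every term at least one factor vanishes. Split the index set into $m\in V-\overline{F_v}[t-1]$ and $m\in\overline{F_v}[t-1]$. For $m\in V-\overline{F_v}[t-1]$, the induction hypothesis immediately gives $\bfM'_{mk}[t-1]=0$. For $m\in\overline{F_v}[t-1]$, Lemma \ref{lemma:always_notFv_not_verified} (applied with $r=\tau=t-1$, $b=m$, and $i=j\in V-\overline{F_v}[t]$) yields $(*,m,t-1)\notin\R^c_j[t]$, so by (\ref{eq:matrix_i-2}) we get $\bfM_{jm}[t]=0$. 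Every term in the sum is therefore $0$, and $\bfM'_{jk}[t]=0$.

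The only delicate point is ensuring that Lemma \ref{lemma:always_notFv_not_verified} applies in both cases $j\in V-F$ and $j\in F_v[t]$ making up $V-\overline{F_v}[t]$; this is exactly what that lemma asserts (it quantifies $i$ over all of $V-\overline{F_v}[\tau+1]$, covering both the genuinely fault-free nodes and the verified faulty ones whose $\R^c$-sets are defined via (\ref{e_faulty_V})), so no additional case analysis beyond what is already present in the proof of Theorem \ref{t_M} is needed. Hence the induction closes and the claim follows.
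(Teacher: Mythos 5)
Your proof is correct and follows essentially the same route as the paper's: induction on $t$ via $\bfM'[t]=\bfM[t]\,\bfM'[t-1]$, with the base case and the $m\in\overline{F_v}[t-1]$ branch both handled by Lemma \ref{lemma:always_notFv_not_verified} together with the construction rule (\ref{eq:matrix_i-2}), and the $m\in V-\overline{F_v}[t-1]$ branch handled by the induction hypothesis. Your closing remark that the lemma already quantifies over all of $V-\overline{F_v}[\tau+1]$, so no separate case analysis for $j\in V-F$ versus $j\in F_v[t]$ is needed, is accurate.
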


\begin{proof}
The proof is by induction on $t$.

{\em Induction Basis}: Consider the case when $t = 1$. Recall that $V - \overline{F_v}[1] = (V-F) \cup F_v[1]$. Consider
any $j\in V - \overline{F_v}[1]$, and $k \in \overline{F_v}[0]$. Then by 
Lemma \ref{lemma:always_notFv_not_verified}, $(*,k,0) \not\in \R^c_j[1]$.
Then, due to (\ref{eq:matrix_i-2}),
$\bfM_{jk}[1]=0$, and hence $\bfM'_{jk}[1]=\bfM_{jk}[1]=0$.

{\em Induction}: Consider $t \geq 2$. Assume that the claim holds true through  $t - 1$. 
Then, $\bfM'_{jk}[t-1]=0$ for all $j \in V - \overline{F_v}[t-1]$ and $k \in \overline{F_v}[0]$.
Recall that $\bfM'[t-1] = \Pi_{\tau=1}^{t-1} \bfM[\tau]$.

Now, we will prove that the claim holds true for $t$.
Consider $j \in V - \overline{F_v}[t]$
and $k\in \overline{F_v}[0]$.
Note that $\bfM'[t] = \Pi_{\tau=1}^{t} \bfM[\tau] = \bfM[t] \Pi_{\tau=1}^{t-1} \bfM[\tau] = \bfM[t] \bfM'[t-1]$.
Thus, $\bfM'_{jk}[t]$ can be non-zero only if there exists a $q \in V$ such that $\bfM_{jq}[t]$ and $\bfM'_{qk}[t-1]$
are both non-zero.

For any $q\in \overline{F_v}[t-1]$, by
Lemma (\ref{lemma:always_notFv_not_verified}), $(*,q,t-1) \not\in \R^c_j[t]$.
Then, due to (\ref{eq:matrix_i-2}), $\bfM_{jq}[t] = 0$ for all $q \in \overline{F_v}[t-1]$.
 Additionally, by the induction hypothesis,
 for all $q \in V - \overline{F_v}[t-1]$ and $k \in \overline{F_v}[0]$,
 $\bfM'_{qk}[t-1] = 0$.
Thus, these two observations together imply that there does not exist any $q \in V$ such that
$\bfM_{jq}[t]$ and $\bfM'_{qk}[t-1]$ are both non-zero.
Hence, $\bfM'_{jk}[t]=0$.
\end{proof}

~

~

Recall from (\ref{eq:I}) that
\[
I = \cap_{D \subset G, |D| = n - 2f-\phi}\, \HH(D)
\]
where
$\phi=|F|$,
and $G$ is the set of inputs at the $n-\phi$ fault-free nodes. \\

\noindent{\bf Lemma \ref{lemma:opt}:}{\em For all $i\in V-F$ and $t\geq 0$,
  $I\subseteq h_i[t]$. \\
}

\begin{proof}
We first prove that the convex polytope $I$ is contained in $h_i[0]$ for all $i \in \sv - \overline{F_v}[0]$. Note that $V-\overline{F_v}[0]=(V-F)\cup F_v[0]$.\\

 Consider two cases:

\begin{itemize}
\item $i \in \sv-F$:

Consider the computation of $h_i[0]$ at fault-free node $i$ using function
$H(\R^c_i[0],0)$ on line 6.
Observe from the definition of function $H$ in Section \ref{s_ops} that
\[ X:=\{h~|~ (h,j,-1)\in \R^c_i[0],~ j\in V\}\]
and 
\[ h_i[0]~:= ~ \cap_{\,C \subset X, |C| = |X| - f}~~\HH(C).\]
Since $|X|=|\R^c_i[0]| \geq n-f$ and $\R^c_i[0]$ contains tuples corresponding
to at most $|F|=\phi$ nodes, 
$|X \cap G| \geq n - f-\phi$. Therefore, every multiset $C$ in the computation
of $h_i[0]$ contains inputs of at least $n-2f-\phi$ fault-free nodes.
Thus, $h_i[0]$ contains $I$.

\item $i \in F_v[0]$:

Suppose that round $0$ execution of node $i$ is verified by a fault-free node $j$.
By Claim \ref{c_ver}, eventually
$\R^c_i[0]\subseteq \R_j[0]$, and $h_i[0]=H(\R^c_i[0],0)$.
Since node $j$ is fault-free, at most $\phi$ elements in $\R_j[t]$ correspond
to 
faulty nodes. Therefore, at most $\phi$ elements in $\R^c_i[t]$ correspond to faulty
nodes, and at least $|\R^c_i[t]|-\phi \geq n-f-\phi$ correspond to fault-free nodes.
Thus, similar to the previous case,
$|X \cap G| \geq n - f-\phi$. Therefore, every multiset $C$ in the computation
of $h_i[0]$ contains inputs of at least $n-2f-\phi$ fault-free nodes.
Thus, $h_i[0]$ contains $I$.

\end{itemize}

\noindent
Now we make several observations for each fault-free node $i\in V-F$:
\begin{itemize}
\item As shown above, $I \in h_j[0]$ for all $j\in V-\overline{F_v}[0]$. 
\item
From (\ref{e_Mstar}), for $t\geq 1$,
\[
\vectorv[t]=\bfM^* \vectorv[0]
\]
where $\vectorv_j[0]=h_j[0]$ for $j\in V-\overline{F_v}[0]$.
\item By Theorem \ref{t_M},  $\vectorv_i[t] = h_i[t]$.
\item
 Observe that $\bfM^*$ equals $\bfM'[t]$ defined in Claim \ref{claim:M*}.
Thus, due to Claim \ref{claim:M*}, $\bfM^*_{ik}=0$ for $k\in\overline{F_v}[0]$ (i.e,. $k\not\in V-\overline{F_v}[0]$).
\item

$\bfM^*$ is the product of row stochastic matrices; therefore, $\bfM^*$ itself is also row stochastic.
Thus, for fault-free node $i$, $\vectorv_i[t]=h_i[t]$ is obtained
as the product of the $i$-th row of $\bfM^*$, namely $\bfM^*_i$, and $\vectorv[0]$:
this product yields 
a linear combination of the elements of $\vectorv[0]$, where the weights
are non-negative and add to 1 (because $\bfM^*_i$ is a stochastic row vector).

\item
From (\ref{e_r_c}), recall that $\bfM_i^*\vectorv[0]=H_l(\vectorv[0]^T~;~\bfM^*_i)$.
Function $H_l$ ignores the input polytopes for which the corresponding weight is 0.
Finally, from the previous observations, we have that when the weight in $\bfM^*[i]$ 
is non-zero, the corresponding polytope in 
$\vectorv[0]^T$ contains $I$. 
Therefore, the linear combination also contains $I$.
\end{itemize}
Thus,
$I$ is contained in $h_i[t]=\vectorv_i[t]=\bfM^*_i\vectorv[0]$.

\comment{+++++++++++ old proof +++++++++
Now consider any node $i\in V-F$.
Recall from (\ref{e_Mstar}) that, for $t\geq 1$.
\[
\vectorv[t]=\bfM^* \vectorv[0]
\]
where
$\vectorv_j[0]=h_j[0]$ for $j\in V-\overline{F_v}[0]$,
and by Theorem \ref{t_M},  $\vectorv_i[t] = h_i[t]$.
$\bfM^*$ is the product of row stochastic matrices; therefore, $\bfM^*$ itself is also
row stochastic.
Thus, for fault-free node $i$, $\vectorv_i[t]=h_i[t]$ is obtained
as the product of the $i$-th row of $\bfM^*$, namely $\bfM^*_i$, and $\vectorv[0]$:
this product yields 
a linear combination of the elements of $\vectorv[0]$, where the weights
are non-negative and add to 1 (because $\bfM^*_i$ is a stochastic row vector).

Additionally,
by Lemma \ref{lemma:transition_matrix}, if $k\in \overline{F_v}[0]$,
then $k\in \overline{F_v}[\tau]$ for $\tau\geq 0$.
By definition of $\overline{F_v}[\tau]$,
because $k\in\overline{F_v}[\tau]$,
$(*,k,\tau)\not\in \R^c_i[\tau+1]$.
Therefore,
due to (\ref{eq:matrix_i-2}),
$\bfM_{ik}[\tau+1]=0$.
Thus, $\bfM_{ik}[\tau+1]=0$ for all $\tau\geq 0$
(i.e., $\bfM_{ik}[r]=0$ for $r\geq 1$).
Then the definition of $\bfM^*$ implies that for $k\in \overline{F_v}[0]$, $\bfM^*_{ik}=0$.

Finally, from (\ref{e_r_c}), recall that $\bfM_i^*\vectorv[0]=H_l(\vectorv[0]^T~;~\bfM^*_i)$,
and function $H_l$ 
ignores the input polytopes for which
the corresponding weight is 0.

The above observations together imply that, since $I$ is contained
in $h_i[0]$ for all $i\in V-\overline{F_v}[0]$, $I$ is also contained
in $h_i[t]=\vectorv_i[t]=\bfM^*_i\vectorv[0]$.
++++++++++}

\end{proof}

\end{document}